\theoremstyle{plain}
\newtheorem{theorem}{Theorem}
\newtheorem{lemma}[theorem]{Lemma}
\newtheorem{proposition}[theorem]{Proposition}
\theoremstyle{definition}
\newtheorem{definition}[theorem]{Definition}
\newcommand{\emptyword}[0]{\lambda}
\newcommand{\tuple}[1]{\langle #1 \rangle}
\newcommand{\limrun}[0]{\infty}
\newcommand{\trans}[1]{\mathchoice{\xrightarrow{#1}}{\xrightarrow{\smash{\lower1pt\hbox{$\scriptstyle #1$}}}}{\text{Error}}{\text{Error}}}
\newcommand{\alocc}[2]{|\!|#1|\!|_{#2}}
\newcommand{\occ}[2]{|#1|_{#2}}
\title{Finite-state independence and normal sequences}
\author{\begin{tabular}{lcr}
  Nicol\'as \'Alvarez \hspace*{1cm} & 
  Ver\'onica Becher & \hspace*{1cm}
  Olivier Carton 
\end{tabular}}
\begin{document}

\maketitle

\begin{abstract}
  We consider the previously defined notion of finite-state independence
  and we focus specifically on normal words.  We characterize finite-state
  independence of normal words in three different ways, using three
  different kinds of asynchronous deterministic finite automata with two
  input tapes containing infinite words.  Based on one of the
  characterizations we give an algorithm to construct a pair of
  finite-state independent normal words.
\end{abstract}
\bigskip
\bigskip

\section{Introduction and statement of results}

As defined by \'Emile Borel~\cite{Borel1909}, 
for an alphabet with at least two symbols, an infinite word $x$ is
{\em normal}  if all blocks of symbols of the same length occur in $x$ with the
same limiting frequency.   The most famous normal word was given by
Champernowne in~\cite{Champernowne1933},
\begin{displaymath}
01234567891011121314151617181920212223...
\end{displaymath}
Borel showed that almost all words are normal.  In~\cite{BCH2016} we
introduced the notion of \emph{finite-state independence} for pairs of
infinite words and we showed that almost all pairs of normal words are
finite-state independent.

In this work we characterize the notion finite-state-independence
specifically for normal words, in terms of computations in deterministic
asynchronous finite automata with two input tapes.  We give three
characterizations.

For the first characterization we consider the notion of \emph{fairness} of
a run in a given finite automaton for a given pair of input words.  A run
is fair if the frequency of each state is determined by the stationary
distribution associated with the automaton, hence not determined by the
input words.  This notion of fairness can also be phrased in terms of
frequencies of edges leaving each state.

The second characterization considers \emph{selectors}, which are finite
automata with two input tapes and one output tape such that the symbols in
the output tape are obtained by a selection of the symbols in the first
input tape, while the symbols in the second input tape act as a
consultative oracle.  We require that the selector be oblivious which means
that whether a symbol is selected or not does not depend on its value.
This characterization of finite-state independence of normal words extends
Agafonov's~\cite{Agafonov68} characterization of normality based on
selection by finite automata.

The third characterization considers \emph{shufflers}, which are finite
automata with two input tapes and one output tape such that, after the run,
the output tape contains all the symbols from the two normal words but
shuffled.  The output intercalates symbols from each of the input words,
preserving the order in which they appear in the input words.

We can now state the first theorem.

\begin{theorem}[Characterization Theorem] \label{thm:charac}
  Let $x$ and~$y$ be two normal words respectively on the alphabets $A$
  and~$B$. The following statements are equivalent.
  \begin{enumerate} \itemsep0cm
  \item \label{sta:independence} The words $x$ and $y$ are finite-state
    independent.
  \item \label{sta:fairness} For every deterministic two-tapes finite
    automaton~$\mathcal{A}$, the run on $x$ and~$y$ in~$\mathcal{A}$ is
    fair.
  \item \label{sta:selector} For every oblivious selector~$\mathcal{S}$,
    the results $\mathcal{S}(x,y)$ and~$\mathcal{S}(y,x)$ are also normal.
  \end{enumerate}
  Furthermore, if alphabets $A$ and~$B$ are equal, the following statement
  is also equivalent.
  \begin{enumerate}[resume]
  \item \label{sta:shuffler} For every shuffler~$\mathcal{S}$, the result
    $\mathcal{S}(x,y)$ is also normal.
  \end{enumerate}
\end{theorem}

Based on the characterization of finite-state independence of normal words
in terms of shufflers given in Theorem~\ref{thm:charac}, we obtain the
following.

\begin{theorem} \label{thm:algorithm}
  For every alphabet $A$, there is an algorithm that computes a pair of
  finite-state independent normal words.
\end{theorem}

The proof exhibits an algorithm that outputs a pair of finite-state
independent normal words $(x,y)$ by outputting, at each step, one new
symbol extending either the currently computed prefix of~$x$ or the
currently computed prefix of $y$.  Unfortunately, the computational
complexity of this algorithm is doubly exponential, which means that to
obtain the $n$-th symbol of the pair of finite-state independent normal
words the algorithm performs a number of operations that is doubly
exponential in~$n$.  Our construction of a pair of finite-state independent
normal words has some similarity with the construction of sequences
representing the fractional expansion of absolutely normal numbers (a
number is absolutely normal if its fractional expansion in each integer
base is a normal word).  Our algorithm here has some similarity with
Turing's algorithm for computing absolutely normal
numbers~\cite{turing,BFP2007}, which also has doubly exponential
computational complexity.

The paper is organized as follows.  In Section~\ref{section:primary} we
present the primary definitions of finite automata, normality and
finite-state independence.  We devote Section~\ref{sec:def} to the notions
of fairness, selecting and shuffling.  In Section~\ref{sec:proof} we give
the proof of Theorem~\ref{thm:charac} (Characterization Theorem).
Section~\ref{sec:algorithm} is devoted to Theorem~\ref{thm:algorithm},
which gives the announced algorithm to compute a pair of finite-state
independent normal words.  Finally in section \ref{sec:conclusion} we
report some open problems.

\section{Primary definitions}\label{section:primary}

Let $A$ be finite set of symbols, that we refer as the alphabet.  We write
$A^\omega$ for the set of all infinite words in alphabet $A$, $A^*$ for the
set of all finite words, $A^{\leq k}$ for the set of all words of length up
to $k$, and $A^k$ for the set of words of length exactly $k$.  The length
of a finite word $w$ is denoted by $|w|$.  The empty word is denoted
by~$\emptyword$.

\subsection{Normality}

We start with  some notation.
The positions of finite and infinite words are numbered starting at~$1$.  To
denote the symbol at position~$i$ of a word $w$ we write $w[i]$ and to
denote the substring of $w$ from position~$i$ to~$j$ we write $w[i..j]$.

\begin{definition}
  For $w$ and $u$ two words, the number $\occ{w}{u}$ of \emph{occurrences}
  of~$u$ in~$w$ and the number $\alocc{w}{u}$ of \emph{aligned occurrences}
  of~$u$ in~$w$ are respectively given by
  \begin{align*}
    \occ{w}{u}  & =|\{ i : w[i..i+|u|-1] = u \}|, \\
    \alocc{w}{u} & =|\{ i : w[i..i+|u|-1] = u \text{ and } i = 1 \mod |u|\}|.
  \end{align*}
\end{definition}
For example,  $\occ{aaaaa}{aa} = 4$ and
$\alocc{aaaaa}{aa} = 2$.
Notice that the definition of aligned occurrences has the condition
$i = 1 \mod |u|$ instead of $i = 0 \mod |u|$, because the positions are
numbered starting at~$1$.  Of course, when a word $u$ is just a symbol,
$\occ{w}{u}$ and $\alocc{w}{u}$ coincide.  Counting aligned occurrences of a
word of length $r$ over alphabet~$A$ is exactly the same as counting
occurrences of the corresponding symbol over alphabet~$A^r$. To be precise,
consider alphabet $A$, a length~$r$, and an alphabet~$B$ with $|A|^r$
symbols.  The set of words of length~$r$ over alphabet~$A$ and the set $B$
are isomorphic, as witnessed by the isomorphism $\pi: A^r \to B$ induced by
the lexicographic order in the respective sets.  Thus, for any $w \in A^*$
such that $|w|$ is a multiple of~$r$, $\pi(w)$ has length $|w|/r$ and
$\pi(u)$ has length $1$, as it is just a symbol in~$B$.  Then, for any
$u\in A^r$, $\alocc{w}{u} = \occ{\pi(w)}{\pi(u)}$.

We now present the definition of Borel normality~\cite{Borel1909} directly
on infinite words.    An infinite word $x$ is \emph{simply normal}
to word length~$\ell$ if, for every $u \in A^\ell$,
\begin{displaymath}
  \lim_{n \to \infty} \frac{\alocc{x[1..(n\ell)]}{u}}{n} = |A|^{-\ell}.
\end{displaymath}
An infinite word $x$ is \emph{normal} if it is simply normal to every word
length.  There are several other equivalent formulations of normality, they
can be read from~\cite{BC2018,Bugeaud12,KuiNie74}.

\subsection{Automata}

\begin{figure}[htbp]
  \begin{center}
    \begin{tikzpicture}[scale=1.4]
    \node (state) at (-1,2.5) [shape=rectangle,draw,rounded corners=1mm,
                            inner sep=10] {$Q$};
    \begin{scope}
      \draw (0.5,3.5)  -- (4.2,3.5);
      \draw[dotted] (4.2,3.5) -- (4.5,3.5);
      \draw (0.5,4)    -- (4.3,4);
      \draw[dotted] (4.3,4) -- (4.5,4);
      \foreach \x in {0.5,1,1.5,2,2.5,3,3.5,4}
        \draw (\x,3.5) -- (\x,4);
      \draw[very thick] (2,3.5) rectangle (2.5,4);
      \draw[->,>=latex,rounded corners] (state) -| ++(1,0.75) -| (2.25,3.5);
      \foreach \x/\xtext in {0.75/1,1.25/2,1.75/3,2.25/4,2.75/5,3.25/6,3.75/7}
        \node at (\x,3.7) {$a_{\xtext}$};
    \end{scope}
    \begin{scope}[shift={(0,-1)}]
      \draw (0.5,3.5)  -- (4.2,3.5);
      \draw[dotted] (4.2,3.5) -- (4.5,3.5);
      \draw (0.5,4)    -- (4.3,4);
      \draw[dotted] (4.3,4) -- (4.5,4);
      \foreach \x in {0.5,1,1.5,2,2.5,3,3.5,4}
        \draw (\x,3.5) -- (\x,4);
      \draw[very thick] (1,3.5) rectangle (1.5,4);
      \draw[->,>=latex,rounded corners] (state) -| ++(1,-0.25) -| (1.25,3.5);
      \foreach \x/\xtext in {0.75/1,1.25/2,1.75/3,2.25/4,2.75/5,3.25/6,3.75/7}
        \node at (\x,3.7) {$b_{\xtext}$};
    \end{scope}
    \begin{scope}[shift={(0.5,1)}]
      \draw (0,0.5)  -- (4.3,0.5);
      \draw[dotted] (4.3,0.5) -- (4.5,0.5);
      \draw (0,0)    -- (4.2,0);
      \draw[dotted] (4.2,0) -- (4.5,0);
      \foreach \x in {0,0.5,1,1.5,2,2.5,3,3.5,4}
        \draw (\x,0) -- (\x,0.5);
      \draw[very thick] (3,0) rectangle (3.5,0.5);
      \draw[->,>=latex,rounded corners] (state) -| ++(1,-0.75) -| (3.25,0.5);
      \foreach \x/\xtext in {0.25/1,0.75/2,1.25/3,1.75/4,2.25/5,2.75/6,3.25/7,3.75/8}
        \node at (\x,0.2) {$c_{\xtext}$};
    \end{scope}
    \end{tikzpicture}
  \end{center}
  \caption{Working principle of a $3$-automaton.}
  \label{fig:principle}
\end{figure}
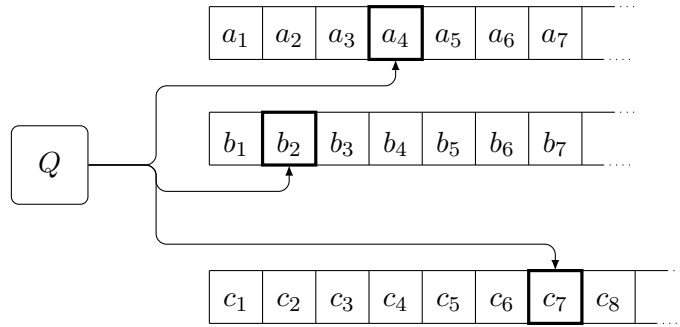

In this work we consider asynchronous finite automata running on a tuple of
infinite words with no accepting condition.  A thorough presentation of
these automata is in the books~\cite{PerrinPin04,Sakarovitch09}.

We consider $k$-tape automata, also known as $k$-tape transducers.  In the
rest of the paper, we use names such as \emph{compressors},
\emph{selectors} \emph{shufflers} or \emph{splitters} for some subclasses
of these automata to emphasize their use.  To simplify the presentation, we
assume here that the same alphabet~$A$ for the $k$ tapes.  A
\emph{$k$-automaton} is a tuple ${\mathcal A} = \tuple{Q,A,\delta,I}$,
where $Q$ is the finite state set, $A$ is the alphabet, $\delta$ is the
transition relation, $I$ the set of initial states.  The set of transition
relations is a finite subset of $Q \times (A^*)^k \times Q$.  A transition
is thus a tuple $\tuple{p,u_1,\ldots,u_k,q}$ where $p$ is its
\emph{starting state}, $\tuple{u_1,\ldots,u_k}$ is its \emph{label} and $q$
is its \emph{ending state} A transition is written
$p \trans{u_1,\ldots,u_k} q$.  As usual, two transitions are
\emph{consecutive} if the ending state of the first one is the starting
state of the second one.  A finite \emph{run} is a finite sequence of
consecutive transitions
\begin{displaymath}
  q_0 \trans{u_{1,1},\ldots,u_{k,1}} q_1 
      \trans{u_{1,2},\ldots,u_{k,2}} q_2 \cdots q_{n-1} 
      \trans{u_{1,n},\ldots,u_{k,n}} q_n.
\end{displaymath}
The \emph{label} of the run is the component-wise concatenation of the
labels of the transitions.  More precisely, it is the tuple
$\tuple{v_1,\ldots,v_k}$ where each $v_j$ for $1 \le j \le k$ is equal to
$u_{j,1}u_{j,2}\cdots u_{j,n}$.  Such a run is written shortly
as $q_0 \trans{v_1,\ldots,v_k} q_n$.
An infinite \emph{run}  is an infinite sequence of consecutive transitions
\begin{displaymath}
  q_0 \trans{u_{1,1},\ldots,u_{k,1}} q_1 
      \trans{u_{1,2},\ldots,u_{k,2}} q_2 
      \trans{u_{1,3},\ldots,u_{k,3}} q_3 \cdots 
\end{displaymath}
As for the finite case, the \emph{label} of the infinite run is the component-wise concatenation of the
labels of the transitions.  More precisely, it is the tuple
$\tuple{x_1,\ldots,x_k}$ where each $x_j$ for $1 \le j \le k$ is equal to
$u_{j,1}u_{j,2}u_{j,3}\cdots$.  Note that some label $x_j$ might be finite
although the run is infinite since some transitions may have empty
labels. The run is accepting if its first state~$q_0$ is initial and each
word~$x_j$ is infinite.  Such an accepting run is written shortly
$q_0 \trans{x_1,\ldots,x_k} \limrun$.  The tuple $\tuple{x_1,\ldots,x_k}$
is accepted if there exists at least one accepting run with label
$\tuple{x_1,\ldots,x_k}$.  Notice that there is no constraint on the states
occurring infinitely often in an accepting run.

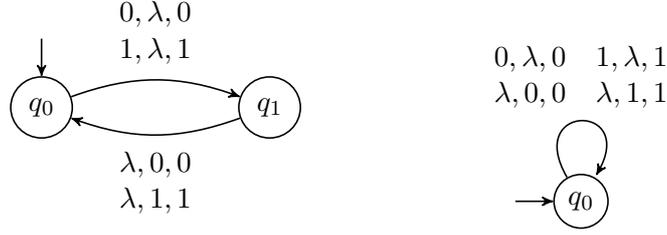
\begin{figure}
  \begin{center}
    \begin{tikzpicture}[->,>=stealth',initial text=,semithick,auto,inner sep=4pt]
      \tikzstyle{every state}=[minimum size=0.5]
      \node[state,initial above] (q0) at (0,0)  {$q_0$};
      \node[state]  (q1) at (3,0) {$q_1$};
      \path (q0) edge[bend left=20] node {$\begin{array}{c} 
                                             0,\emptyword,0 \\
                                             1,\emptyword,1
                                          \end{array}$} (q1);
      \path (q1) edge[bend left=20] node {$\begin{array}{c} 
                                             \emptyword,0,0 \\
                                             \emptyword,1,1
                                          \end{array}$} (q0);
    \end{tikzpicture}
\hspace*{2cm}   
\begin{tikzpicture}[->,>=stealth',initial text=,semithick,auto,inner sep=3pt]
      \tikzstyle{every state}=[minimum size=0.4]
      \node[state,initial left] (q0) at (0,0)  {$q_0$};
      \path (q0) edge[out=120,in=60,loop] node {$\begin{array}{c} 
                                       0,\emptyword,0 \quad 1,\emptyword,1 \\
                                       \emptyword,0,0 \quad \emptyword,1,1
                                                \end{array}$} ();
    \end{tikzpicture}
  \end{center}
  \caption{A $2$-deterministic $3$-automaton (left) 
\label{fig:join} 
 and    a non-deterministic $3$-automaton (right)}
  \label{fig:shuffle}
\end{figure}

In this work we consider only deterministic $k$-automata whose transition
function is determined by a subset of the $k$ tapes.  We say that a
$k$-automaton is \emph{$\ell$-deterministic}, with $1 \le \ell \le k$, if
the following two conditions are fulfilled:
\begin{enumerate} \itemsep0cm
\item the set $I$ of initial states is a singleton set;
\item for each state~$p$, there is an integer~$1 \le i \le \ell$ such that
  for each transition $p \trans{u_1,\ldots,u_k} q$ starting from~$p$, $u_i$
  is a symbol and $u_1,\ldots, u_{i-1},u_{i+1},\ldots,u_{\ell}$ are empty.
  Furthermore if $p \trans{u_1,\ldots,u_k} q$ and
  $p \trans{u'_1,\ldots,u'_k} q'$ are two transitions starting from~$p$,
  then $u_i \neq u'_i$.
\end{enumerate}

The $\ell$-deterministic automaton is called \emph{$\ell$-complete} if for
each state~$p$ and each symbol~$a$, there is an integer~$i$ (depending only
on~$p$) and a transition $p \trans{u_1,\ldots,u_k} q$ starting from~$p$ such
that $1 \le i \le \ell$ and $u_i = a$.
The $\ell$-determinism guarantees that for each tuple
$\tuple{x_1,\ldots,x_{\ell}}$ of infinite words, there exists at most one
run such that the first $\ell$ components of its label are
$\tuple{x_1,\ldots,x_{\ell}}$.  Even if the automaton is $\ell$-complete,
this run might be not accepting since one of its labels might be finite.

The $3$-automaton at the left of Figure~\ref{fig:join} accepts a triple
$\tuple{x,y,z}$ of infinite words over the alphabet $\{ 0, 1 \}$ whenever
$z$ is the join of $x$ and~$y$; recall that the join of two infinite words
$x = a_1a_2a_3\cdots$ and $y = b_1b_2b_3\cdots$ is the infinite word
$z = a_1b_1a_2b_2a_3\cdots$.  This automaton is $2$-deterministic.
The $3$-automaton pictured at the right of Figure~\ref{fig:shuffle} accepts
a triple $\tuple{x,y,z}$ of infinite words over the alphabet $\{ 0, 1 \}$
whenever $z$ is a shuffle of the symbols in $x$ and~$y$. This automaton is
not $2$-deterministic.  Indeed the first condition on transitions is not
fulfilled by the two transitions $q_0 \trans{0,\emptyword,0} q_0$ and
$q_0 \trans{\emptyword,0,0} q_0$.

Let $\mathcal{A}$ be an $\ell$-deterministic $k$-automaton.  
For each tuple
$\tuple{x_1,\ldots,x_{\ell}}$ of infinite words, there exists at most one
tuple $\tuple{y_{\ell+1},\ldots,y_k}$ of infinite words such that the
$k$-tuple \linebreak
$\tuple{x_1,\ldots,x_{\ell},y_{\ell+1},\ldots,y_k}$ is accepted
by~$\mathcal{A}$.  The automaton~$\mathcal{A}$ realizes then a partial
function from $(A^\omega)^{\ell}$ to $(A^\omega)^{k-\ell}$ and the tuple
$\tuple{y_{\ell+1},\ldots,y_k}$ is denoted by
$\mathcal{A}(x_1,\ldots,x_{\ell})$.  The $1$-deterministic $2$-automata are
also called sequential transducers in the literature.  When a $k$-automaton
is $\ell$-deterministic, each transition is written
\begin{displaymath}
  p \trans{u_1,\ldots,u_{\ell}|v_{\ell+1},\ldots, v_k} q
\end{displaymath}
to emphasize that the first $\ell$ tapes are input tapes and that
the $k-\ell$ remaining ones are output tapes.

Let $\mathcal{A}$ be a $1$-deterministic $2$-automaton.  We say that
$\mathcal{A}$ is a \emph{compressor} if the (partial) function
$x \mapsto \mathcal{A}(x)$ which maps $x$ to the output~$\mathcal{A}(x)$ is
one-to-one.
The compression ratio of an infinite word~$x$ for~$\mathcal{A}$ is given by
the unique accepting run
$q_0 \trans{u_1|v_1} q_1 \trans{u_2|v_2} q_2 \trans{u_3|v_3} q_3 \cdots$
where $x = u_1u_2u_3\cdots$ as
\begin{displaymath}
 \rho_{\mathcal{A}}(x) = 
    \liminf_{n \to \infty} 
    \frac{|v_1v_2\cdots v_n|}{|u_1u_2\cdots v_n|}.
\end{displaymath} 
This compression ratio for a given automaton~$\mathcal{A}$ can have
any non-negative real value.  In particular, it can be greater than~$1$.
An infinite word $x$ is \emph{compressible} by a $1$-deterministic
$2$-automaton~$\mathcal{A}$ if $\rho_{\mathcal{A}}(x) < 1$. 
 The \emph{compression ratio} of a given word~$x, $  $\rho(x)$,
 is the infimum of the compression ratios achievable by all one-to-one $1$-deterministic
$2$-automata, namely,
\begin{displaymath}
  \rho(x) = \inf\{\rho_{\mathcal{A}}(x) :  
   \text{$\mathcal{A}$ is a one-to-one $1$-deterministic $2$-automaton}\}
\end{displaymath}
For every infinite word $x$, $\rho(x)$ is less than or equal to~$1$,
because there exists a compressor~$\mathcal{A}_0$ which copies each symbol
of the input to the output, so $\rho_{\mathcal{A}_0}(x)$ is equal to~$1$.
The compression ratio of the word $x = 0^{\omega}$ is $\rho(x) = 0$ because
for each positive real number~$\varepsilon$ there exists a
compressor~$\mathcal{A}$ such that $\rho_{\mathcal{A}}(x) < \varepsilon$.
Notice that in this case the compression ratio equal to~$0$ is not
achievable by any compressor~$\mathcal{A}$.  It follows from the results
in~\cite{Schnorr71,Dai04} that the words $x$ with compression ratio
$\rho(x)$ equal to~$1$ are the exactly the normal words.  A direct proof of
this result appears in~\cite[Characterization Theorem]{BC2018}.

\subsection{Finite-state independence}

Roughly, two infinite words, possibly over different alphabets, are
finite-state independent if none of them helps to compress the other using
$3$-automata.  In our setting, a \emph{compressor} is a $2$-deterministic
$3$-automata~$\mathcal{A}$ such that for any fixed infinite word~$y$, the
function $x \mapsto \mathcal{A}(x,y)$ which maps $x$ to the
output~$\mathcal{A}(x,y)$ is one-to-one.  This guarantees that if $y$ is
known, $x$ can be recovered from $\mathcal{A}(x,y)$.  Note that we do not
require that the function $(x,y) \mapsto \mathcal{A}(x,y)$ be one-to-one,
which would be a much stronger assumption.  For example, the
$2$-deterministic $3$-automaton~$\mathcal{A}$ which maps the infinite
words~$x$ and~$y$ to the infinite word~$z$ satisfying
$z[i] = x[i] + y[i] \mod |A|$ for each $i \ge 1$ is, indeed a compressor
but the function $(x,y) \mapsto \mathcal{C}(x,y)$ is not one-to-one.

\begin{definition}[\cite{BCH2016}]
  Let $\mathcal{A}$ be a compressor.  For simplicity in the presentation we
  assume just one alphabet.  However, it is possible to have three
  different alphabets, one for each input tape and one for the output tape.
  The \emph{conditional compression ratio} of an infinite word~$x$ with
  respect to~$y$ in~$\mathcal{A}$ is given by the unique accepting run
  \begin{displaymath}
    q_0 \trans{u_1,v_1|w_1} q_1 
        \trans{u_2,v_2|w_2} q_2 
        \trans{u_3,v_3|w_3} q_3 \cdots
  \end{displaymath}
  such that $x = u_1u_2u_3\cdots$ and $y = v_1v_2v_3\ldots$ as
  \begin{displaymath}
     \rho_{\mathcal{A}}(x/y) = 
           \liminf_{n \to \infty} \frac{|w_1w_2w_3\cdots|}{|u_1u_2u_3\cdots|}.
  \end{displaymath} 
  In case the input tape and the output tape have respective alphabets $A$
  and $B$ of different sizes, the formula above should be multiplied by
  $\log |A|/\log |B|$.  Notice that the number of symbols read from~$y$,
  namely $|v_1v_2v_3\cdots|$, is not taken into account in the value
  of~$\rho_{\mathcal{A}}(x/y)$.  

The \emph{conditional compression ratio}
  of an infinite word $x$ given an infinite word~$y$, $\rho(x/y)$, is the
  infimum of the compression ratios $\rho_{\mathcal{A}}(x/y)$ of all
  compressors~$\mathcal{A}$ with input~$x$ and oracle~$y$.
\end{definition}

\begin{definition}[\cite{BCH2016}]
  Two infinite words $x$ and~$y$, possibly over different alphabets, are
  \emph{finite-state independent} if $\rho(x/y) = \rho(x)$,
  $\rho(y/x) = \rho(y)$ and the compression ratios of $x$ and~$y$ are
  non-zero.
\end{definition}

Notice that the compression ratios of $x$ and~$y$ should not be zero.  This
means that a word~$x$ such that $\rho(x) = 0$ is finite-state independent
of no word.  Without this requirement, two words $x$ and~$y$ such that
$\rho(x) = \rho(y) = 0$ would be finite-state independent.  In particular,
each word~$x$ with $\rho(x) = 0$ would be finite-state independent of
itself.  From the definition of finite-state independence follows that, if
the infinite words $x$ and~$y$ are finite-state independent, each suffix
of~$x$ is finite-state independent of each suffix of~$y$.

Finite-state independence for a pair of normal words differs from the
classical notion of normality for dimension~$2$ also known as \emph{joint
  normality} \cite{KuiNie74}.  When two normal words are finite-state
independent then they are also jointly normal, but the reverse implication
fails.  A witness for this appears in~\cite{BCH2016} with two normal words
$x$ and $y$ such that $x$ is identical to the intercalation of the symbols
of $x$ and $y$ (in our construction the sequence $x$ satisfies that for
every position $n$, $x[n]=x[2n]$).  Thus, from the normality of $x$ follows
that $x$ and $y$ are jointly normal.  However, given $x$ we can obtain $y$
as the subsequence of $x$ in the odd positions, hence $x$ and $y$ are not
finite-state independent.  This already suggests that the concept of
finite-state independence can not be obtained with synchronous automata.

Although finite-state independence of normal words is more demanding than
joint normality, it still holds that almost all pairs of normal words are
finite-state independent. This is proved in~\cite[Theorem 5.1]{BCH2016}.

\section{Fairness, selecting and shuffling}\label{sec:def}

\subsection{Fairness}

We use the terminology of Markov chains
for strongly connected components of an automaton.  A strongly connected
component of an automaton is called \emph{recurrent} if any state reachable
from it is still in it.  It is called \emph{transient} otherwise.  By
extension, a state is called \emph{recurrent} (respectively,
\emph{transient}) whenever it belongs to a recurrent (respectively,
transient) strongly connected component.  Let $\mathcal{A}$ be a
$2$-deterministic $2$-automaton and let $x$ and $y$ be two infinite words,
possibly over different alphabets.  Let $\gamma$ be the run
of~$\mathcal{A}$ on $x$ and~$y$
\begin{displaymath}
  q_0 \trans{\bar{a}_1,\bar{b}_1} 
  q_1 \trans{\bar{a}_2,\bar{b}_2} 
  q_2 \trans{\bar{a}_3,\bar{b}_3} q_3 \cdots
\end{displaymath}
where each $\bar{a}_i$ and each $\bar{b}_i$ is either a symbol or the empty
word and each $q_{i-1} \trans{\bar{a}_i,\bar{b}_i} q_i$ is a transition
of~$\mathcal{A}$.  With a slight abuse of notation let
$\occ{\gamma[1..n]}{q}$ denote the number of occurrences of the state~$q$
in the first $n$ states of~$\gamma$.  More precisely, this is the
cardinality of the set
\begin{displaymath}
  \{ i:  0 \le i \le n-1,\;\; q_i = q \}.
\end{displaymath}
Similarly, for each transition $\tau = p \trans{\bar{a},\bar{b}} q$ let
$\occ{\gamma[1..n]}{\tau}$ denote the number of occurrences of~$\tau$ in
the first $n$ transitions of~$\gamma$.  More precisely, this is the
cardinality of the set
\begin{displaymath}
  \{i :  1 \le i \le n,\;\; q_{i-1} \trans{\bar{a}_i,\bar{b}_i} q_i = \tau \}.
\end{displaymath}

We first introduce the notion of fairness for states.  It is based on
on a notion of stationary distribution of an automaton which is now
defined.
We associate with $2$-deterministic and $2$-complete
$2$-automaton~$\mathcal{A}$ a Markov chain described by a stochastic
matrix~$M$.  Let $A$ and $B$ be the alphabets for the first and second tape
of~$\mathcal{A}$.  The state set of the Markov chain is the state set~$Q$
of~$\mathcal{A}$.  The dimension of the matrix~$M$ is thus the number~$|Q|$
of states and its rows and columns are indexed by element of~$Q$.  For two
states $p$ and~$q$, the $(p,q)$-entry of $M$ is the sum of the weights of
all transitions from~$p$ to~$q$ where the weights are as follows.  The
weight of a transition of the form $p \trans{a,\emptyword} q$ (respectively
$p \trans{\emptyword,b} q$) is $1/|A|$ (respectively $1/|B|$).

\begin{figure}[htbp]
  \begin{center}
    \begin{tikzpicture}[->,>=stealth',initial text=,semithick,auto,inner sep=3pt]
      \tikzstyle{every state}=[minimum size=0.4]
      \node[state,initial left] (q0) at (0,0)  {$q_1$};
      \node[state] (q1) at (2,0)  {$q_2$};
      \path (q0) edge[out=120,in=60,loop] node {$0,\emptyword$} ();
      \path (q0) edge[bend left=20] node {$1,\emptyword$} (q1);
      \path (q1) edge[bend left=20] node {$\begin{array}{c} 
                                            \emptyword,0 \\
                                            \emptyword,1
                                           \end{array}$} (q0);
    \end{tikzpicture}
  \end{center}
  \caption{A $2$-deterministic $2$-automaton}
  \label{fig:distrib}
\end{figure}
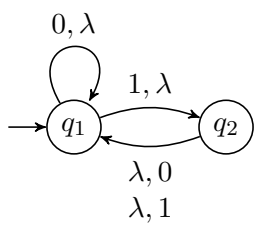

If the automaton~$\mathcal{A}$ is strongly connected then the Markov chain
is irreducible.  By~\cite[Theorem~1.5]{Senata06}, there exists a unique
stationary distribution, that is, a line vector~$\pi$ such that
$\pi M = \pi$ and $\sum_{q \in Q}{\pi(q)} = 1$.  By definition, this vector
is called the \emph{stationary distribution} of the
automaton~$\mathcal{A}$.  For example, the matrix of the associated Markov
chain for the $2$-automaton in Figure~\ref{fig:distrib} is the
$2 \times 2$-matrix~$M$ given by
\begin{displaymath}
 M = \left( 
     \begin{matrix} 
       \frac{1}{2} & \frac{1}{2} \\ 
       1 & 0 
     \end{matrix} 
     \right)
\end{displaymath}
and the stationary distribution is thus given by $\pi(q_1) = 2/3$ and
$\pi(q_2) = 1/3$.

If the automaton~$\mathcal{A}$ is not strongly connected, the stationary
distribution~$\pi$ of~$\mathcal{A}$ is defined as follows.  For each
transient state~$q$, $\pi(q)$ is equal to~$0$.  For any recurrent
state~$q$, $\pi(q) = \hat{\pi}(q)$ where $\hat{\pi}$ is the stationary
distribution of the strongly connected component of~$q$, considered as a
whole automaton.  This is well-defined because this stationary
distribution only depends on the edges of the automaton and not on its
initial and final states.

Let $\mathcal{A}$ be a $2$-deterministic $2$-automaton and let $x$ and $y$
be two infinite words, possibly over different alphabets.  Let
$\gamma = q_0 \trans{\bar{a}_1,\bar{b}_1} q_1 \trans{\bar{a}_2,\bar{b}_2}
q_2 \cdots$ be the run of~$\mathcal{A}$ on $x$ and~$y$.  This run is
called \emph{fair for states} if for any state~$q$ which occurs
in~$\gamma$,
\begin{displaymath}
  \lim_{n\to\infty}{\frac{\occ{\gamma[1..n]}{q}}{n}} = \pi(q).
\end{displaymath}
The run~$\gamma$ is called \emph{fair for edges} if for any pair of
transitions $\tau$ and~$\tau'$ starting from the state
\begin{displaymath}
  \lim_{n\to\infty}{\frac{\occ{\gamma[1..n]}{\tau}}{n}} = 
  \lim_{n\to\infty}{\frac{\occ{\gamma[1..n]}{\tau'}}{n}}.
\end{displaymath}

Let $\mathcal{A}$ be a $2$-deterministic $2$-automaton.  By analogy with
the line graph, the \emph{line automaton} of~$\mathcal{A}$ is the
automaton~$\hat{\mathcal{A}}$ whose states are the transitions
of~$\mathcal{A}$.  More formally, its state set is
$\hat{Q} = E \cup \{\tau_0\}$ where $E$ is the set of transitions
of~$\mathcal{A}$ and $\tau_0$ is a fresh element  (it does not belong to $E$) 
 being the initial state. Its set~$\hat{E}$ of transitions is given by
\begin{displaymath}
  \hat{E} =
  \{ \tau_0 \trans{\bar{a},\bar{b}} \tau :
  q_0 \in I, \tau = q_0 \trans{\bar{a},\bar{b}} q\}  \cup
  \{ \tau' \trans{\bar{a},\bar{b}} \tau :
  \tau' = r \trans{\bar{c},\bar{d}} p, \tau = p \trans{\bar{a},\bar{b}} q\}.
\end{displaymath}

There is a tight correspondence between runs in~$\mathcal{A}$ and runs
in~$\hat{\mathcal{A}}$.  To each run in~$\mathcal{A}$
\begin{displaymath}
  q_0 \trans{\bar{a}_1,\bar{b}_1} 
  q_1 \trans{\bar{a}_2,\bar{b}_2} 
  q_2 \trans{\bar{a}_3,\bar{b}_3} q_3 \cdots
\end{displaymath}
starting from the initial state~$q_0$ of~$\mathcal{A}$ corresponds the run
\begin{displaymath}
  \tau_0 \trans{\bar{a}_1,\bar{b}_1} 
  \tau_1 \trans{\bar{a}_2,\bar{b}_2} 
  \tau_2 \trans{\bar{a}_3,\bar{b}_3} \tau_3 \cdots
\end{displaymath}
where $\tau_0$ is the fresh initial state of~$\hat{\mathcal{A}}$ and
$\tau_i$ is the transition $q_{i-1} \trans{\bar{a}_i,\bar{b}_i} q_i$
for each $i \ge 1$.  Conversely, each run in~$\hat{\mathcal{A}}$ starting
from~$\tau_0$ comes from a run in~$\mathcal{A}$.  The following lemma
relates the stationary distribution of~$\hat{\mathcal{A}}$ with the
stationary distribution of~$\mathcal{A}$.
\begin{lemma} \label{lem:dist-line}
  The stationary distribution~$\hat{\pi}$ of~$\hat{\mathcal{A}}$ maps each
  transition $\tau = p \trans{\bar{a},\bar{b}} q$ to
  $\hat{\pi}(\tau) = \pi(p)/n_p$ where $\pi$ is the stationary distribution
  of~$\mathcal{A}$ and $n_p$ is the number of transitions starting from
  state~$p$ in~$\mathcal{A}$.
\end{lemma}
\begin{proof}
The proof of the Lemma \ref{lem:dist-line} is routine.
\end{proof}

Next we relate the fairness for states and the fairness
for edges: the two notions are equivalent as long as they
hold for all automata.  We start with an auxiliary lemma.

\begin{lemma} \label{lem:fairedgesscc}
  A run which is fair for edges ends in a recurrent strongly connected
  component.
\end{lemma}
\begin{proof}
  Let $P$ be the subset of states
  $\{ q : \liminf_{n\to\infty}{\frac{\occ{\gamma[1..k_n]}{q}}{n}} > 0\}$.
  The set~$P$ cannot be empty because there a finitely many states.
  The hypothesis implies that every state reachable from a state in~$P$ is
  also in~$P$.  Since a recurrent state is reachable from any state,
  $\gamma$ reaches a recurrent state as it was claimed. 
\end{proof}

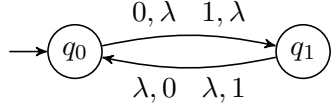
\begin{figure}[htbp]
  \begin{center}
    \begin{tikzpicture}[->,>=stealth',initial text=,semithick,auto,inner sep=3pt]
      \tikzstyle{every state}=[minimum size=0.4]
      \node[state,initial left] (q0) at (0,0)  {$q_0$};
      \node[state] (q1) at (3,0)  {$q_1$};
      \path (q0) edge[bend left=12] node {$0,\emptyword \;\;\; 1,\emptyword$} (q1);
      \path (q1) edge[bend left=12] node {$\emptyword,0 \;\;\; \emptyword,1$} (q0);
    \end{tikzpicture}
  \end{center}
\vspace*{-0.8cm}
  \caption{Another $2$-deterministic $2$-automaton}
  \label{fig:non-fairness}
\end{figure}

Consider the $2$-deterministic $2$-automaton~$\mathcal{A}$ pictured in
Figure~\ref{fig:non-fairness} and the infinite words $x = y = 0^\omega$.
The run on $x$ an~$y$ in~$\mathcal{A}$ is fair for states because both
states $q_0$ and~$q_1$ have frequency $1/2$ but it is not fair for edges
because the transition $q_0 \trans{1,\emptyword} q_1$ is never used.

\begin{proposition} \label{pro:fairness}
  Let $x$ and $y$ be two infinite words.  The run on $x$ and~$y$ is fair
  for states in any $2$-deterministic $2$-automaton if and only if it is
  fair for edges in any $2$-deterministic $2$-automaton.
\end{proposition}

\begin{proof}
  We first show that fairness for states implies fairness for edges.  Let
  $\mathcal{A}$ be a $2$-deterministic $2$-automaton.  This implication
  follows from the hypothesis applied to the line
  automaton~$\hat{\mathcal{A}}$ and Lemma~\ref{lem:dist-line}.
  
  We now prove that fairness for states and fairness for edges coincide.
  Let $\mathcal{A}$ be a $2$-deterministic $2$-automaton and let $\gamma$
  be the run on $x$ and~$y$ in~$\mathcal{A}$.  By
  Lemma~\ref{lem:fairedgesscc} the run~$\gamma$ visits a recurrent state
  of~$\mathcal{A}$.  Therefore, we now assume that $\mathcal{A}$ is
  strongly connected.  To prove the statement about frequencies of states,
  it is sufficient to show that for each increasing sequence of integers
  $(k_n)_{n\ge0}$ such that
  $\lim_{n\to\infty}{\occ{\gamma[1..k_n]}{q}/k_n}$ exists, this limit is
  equal to~$\pi(q)$.  Let $(k_n)_{n\ge0}$ be such a sequence.  Replace
  $(k_n)_{n\ge0}$ by one of its sub-sequences so that
  $\lim_{n\to\infty}{\occ{\gamma[1..k_n]}{q}/k_n}$ exists for each
  state~$q$.  It has already been shown in the previous paragraph that
  these limits cannot be $0$.

  We introduce two sequences $(v_n)_{n\ge0}$ and $(v'_n)_{n\ge0}$ of line
  vectors and a sequence $(M_n)_{n\ge0}$ of matrices. For each state~$q$,
  the $q$-entries of the vectors $v_n$ and~$v'_n$ are given by
  \[
   v_n(q) = \occ{\gamma[1..k_n]}{q}/k_n\] 
  and
  \[ v'_n(q) = \occ{\gamma[2..k_n+1]}{q}/k_n.
  \]
 For each pair of states $p$
  and~$q$, the $(p,q)$-entry of $M_n$ is the sum over all
  transitions~$\tau$ from~$p$ to~$q$ of the ratio
  $\occ{\gamma[1..k_n]}{\tau}/\occ{\gamma[1..k_n]}{p}$.  A routine check
  yields that $v_nM_n = v'_n$ holds for each integer $n \ge 1$.  Both
  sequences $(v_n)_{n\ge0}$ and $(v'_n)_{n\ge0}$ converge to the same line
  vector~$v$ given by
  $v(q) = \lim_{n\to\infty}{\occ{\gamma[1..k_n]}{q}/k_n}$.  From the
  hypothesis, the sequence $(M_n)_{n\ge0}$ converges to the matrix~$M$ of
  the Markov chain associated with~$\mathcal{A}$.  Taking limits gives that
  $vM = v$.  By the uniqueness of the stationary distribution of~$M$,
  $v(q) = \pi(q)$ holds for each state~$q$.
\end{proof}

By Proposition~\ref{pro:fairness}, the two notions of fairness, fairness
for states and fairness for edges are equivalent.  This allows us to use
the notion of fairness without mentioning which one is meant.  The
following lemmas on fairness are used in the proof of
Theorem~\ref{thm:charac}.

Let $\mathcal{A}$ be $2$-deterministic $2$-automaton and let $k$ and $\ell$
be two positive integers.  We introduce a new automaton
$\mathcal{A}_{k,\ell}$.  Its state set is
$Q \times A^{\le k} \times \{\emptyword\} \cup Q \times A^k \times B^{\le
  \ell}$ and its transitions are defined as follows.
\begin{align*}
   (q,u,\emptyword) \trans{a,\emptyword} (q,ua,\emptyword) & \quad
    \text{if $|u| < k$} \\
   (q,u,v) \trans{\emptyword,b} (q,u,vb) & \quad
    \text{if $|u| =  k$ and $|v| < \ell$} \\
   (q,au',v) \trans{a',\emptyword} (q,u'a',v) & \quad
    \text{if $|u'| = k-1$, $|v| = \ell$ and 
          $q \trans{a,\emptyword} q'$ in $\mathcal{A}$} \\
   (q,u,bv') \trans{\emptyword,b'} (q,u,v'b') & \quad
    \text{if $|u| = k$, $|v'| = \ell-1$ and 
          $q \trans{\emptyword,b} q'$ in $\mathcal{A}$}
\end{align*}
Note that the states in
$Q \times A^{\le k} \times \{\emptyword\} \cup Q \times A^k \times
B^{<\ell}$
are obviously transient.  The purpose of these states is to gather the
first $k$ symbols of~$x$ and the first $\ell$ symbols of~$y$ to reach the
state $(q_0,u,v)$ where $q_0$ is the initial state of~$\mathcal{A}$ and $u$
and~$v$ are the prefixes of~$x$ and~$y$ of length $k$ and~$\ell$
respectively.

\begin{lemma}\label{lemma:1}
  If $\mathcal{A}$ is strongly connected, then the restriction of
  $\mathcal{A}_{k,\ell}$ to the set $Q \times A^k \times B^{\ell}$
  is also strongly connected.
\end{lemma}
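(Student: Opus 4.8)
The plan is to reduce the statement to a reachability question inside the window extension $\mathcal{A}_{k,\ell}$ and then to build explicit connecting walks. Recall that a state of $\mathcal{A}_{k,\ell}$ is a triple $(p,u,v)$ consisting of a state $p$ of~$\mathcal{A}$ together with the last symbols read on each tape, and that on the full-window part $Q\times A^k\times B^{\ell}$ a transition $p\trans{a,\emptyword}q$ of~$\mathcal{A}$ (type~I) slides the first window to $u[2..k]\,a$ while leaving $v$ unchanged, and symmetrically a transition $p\trans{\emptyword,b}q$ (type~II) slides the second window to $v[2..\ell]\,b$. Strong connectivity of the restriction is therefore the assertion that from an arbitrary full-window state $(p,u,v)$ one can reach an arbitrary full-window state $(q,u',v')$, and this is what I would establish.

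First I would reach the target control state. Since $\mathcal{A}$ is strongly connected there is a walk $p\to q$ in~$\mathcal{A}$; lifting it to $\mathcal{A}_{k,\ell}$ produces a walk from $(p,u,v)$ to some full-window state $(q,\tilde{u},\tilde{v})$. It then remains to reach $(q,u',v')$ from $(q,\tilde{u},\tilde{v})$, that is, to reset both windows to prescribed values while returning to the \emph{same} control state~$q$.

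The reset step rests on the de Bruijn principle: after $k$ consecutive tape-1 reads the first window equals exactly the $k$ symbols just read, irrespective of its former content, and likewise for the second window after $\ell$ tape-2 reads. So it suffices to exhibit a closed walk at~$q$ whose tape-1 read-suffix is~$u'$ and whose tape-2 read-suffix is~$v'$. I would build such a walk by reading the symbols of~$u'$ and of~$v'$ as the \emph{final} reads on their respective tapes: by $2$-completeness, at any type~I state the next prescribed symbol of~$u'$ can be read and at any type~II state the next prescribed symbol of~$v'$ can be read, while strong connectivity of~$\mathcal{A}$ lets one route between type~I and type~II states, and back to~$q$, in between these prescribed reads. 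Reads performed before the prescribed suffixes are harmless, since they are pushed out of the windows.

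The hard part will be the coupling forced by determinism: the control state reached is determined by the symbols read, so one cannot independently prescribe both the final window contents and the terminal state~$q$. The delicate point is to arrange that the routing which brings the run back to~$q$ does not perform further reads on a tape whose window has already been fixed, since any such read would corrupt the prescribed suffix. I expect to resolve this by choosing a suitable interleaving of the reads of~$u'$ and~$v'$ together with a pumping argument over the finite state set: because both type~I and type~II transitions remain available along the way (by strong connectivity together with $2$-completeness), the correcting detours to~$q$ can be inserted among the non-final reads, where they are absorbed by the windows, leaving the last $k$ and~$\ell$ reads equal to~$u'$ and~$v'$ and the run ending exactly at~$q$.
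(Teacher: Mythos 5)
Your reading of $\mathcal{A}_{k,\ell}$ does not match the paper's definition, and the mismatch is fatal. In the paper the window is a \emph{lookahead} buffer: on the full-window part the type~I transition has the form $(q,au',v) \trans{a',\emptyword} (q',u'a',v)$, where the underlying transition $q \trans{a,\emptyword} q'$ of~$\mathcal{A}$ is driven by the symbol~$a$ \emph{leaving} the window, while the freshly read symbol~$a'$ is merely appended at the back and has no influence whatsoever on the control state (this is visible already in the transient phase, where the windows fill up while the control state stays at~$q_0$). In your setup the $\mathcal{A}$-transition is instead driven by the symbol that \emph{enters} the window, i.e.\ the control state has already processed the window contents. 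Under that history-window semantics the lemma is simply false: take a strongly connected, complete, deterministic automaton over $\{a,b\}$ with two states in which every $a$-transition leads to state~$2$ and every $b$-transition leads to state~$1$; for $k=1$ the product state $(1,a)$ is unreachable, since reading~$a$ always lands in control state~$2$. So the ``coupling forced by determinism'' that you correctly single out as the hard part is a genuine obstruction in your model, and no choice of interleaving or pumping argument can remove it; the closing paragraph in which you ``expect to resolve'' it is a plan rather than an argument, and a plan for something that cannot be carried out in general.

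Under the paper's actual definition the difficulty evaporates, because the final window contents are read \emph{after} the final control state has already been determined: the control state at the end of a finite run depends only on the symbols $\mathcal{A}$ has consumed, and the last $k$ symbols read on the first tape and the last $\ell$ on the second are still sitting unconsumed in the buffers. The paper's proof is accordingly three lines: starting from $(q,u,v)$, flush the old buffers by running $\mathcal{A}$ from~$q$ on $u$ and~$v$, padding one tape with an arbitrary word~$w$ if the other buffer empties first, so as to obtain a finite run $q \trans{uw,v} r$ or $q \trans{u,vw} r$; route $r \trans{u'',v''} q'$ by strong connectivity of~$\mathcal{A}$; then read $u'$ and~$v'$ into the buffers, which leaves the control state at~$q'$ and yields the target state $(q',u',v')$. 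I would encourage you to recheck which symbol of the window labels the underlying $\mathcal{A}$-transition and then rewrite the argument along these lines.
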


\begin{proof}
  Let $(q,u,v)$ and $(q',u',v')$ be two states in
  $Q \times A^k \times B^{\ell}$.  There exist a word~$w$ in $A^* \cup B^*$
  and  states~$r$ of~$\mathcal{A}$ such that either $q \trans{uw,v} r$ or
  $q \trans{u,vw} r$ is a finite run in~$\mathcal{A}$.  By symmetry, it can
  be assumed that $q \trans{uw,v} r$ is a finite run in~$\mathcal{A}$.
  Since $\mathcal{A}$ is strongly connected, there exists a run
  $r \trans{u'',v''} q'$.  Then
  \begin{displaymath}
    (q,u,v) \trans{uwu''u',vv''v'} (q',u',v')
  \end{displaymath}
  is a run in~$\mathcal{A}_{k,\ell}$.
\end{proof}

\begin{lemma}\label{lemma:2}
  If $\mathcal{A}$ is strongly connected and $\pi$ is its stationary
  distribution, then the stationary distribution of~$\mathcal{A}_{k,\ell}$
  is given by $\pi(q,u,v) = \pi(q)/|A|^k|B|^{\ell}$ for each state 
  $(q,u,v) \in Q \times A^k \times B^{\ell}$.
\end{lemma}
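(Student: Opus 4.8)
The plan is to exploit the uniqueness of the stationary distribution. By Lemma~\ref{lemma:1}, the restriction of $\mathcal{A}_{k,\ell}$ to $Q \times A^k \times B^{\ell}$ is strongly connected, so the associated Markov chain is irreducible and, by~\cite[Theorem~1.5]{Senata06}, admits a unique stationary distribution. It therefore suffices to check that the proposed vector $\tilde{\pi}(q,u,v) = \pi(q)/(|A|^k|B|^{\ell})$ is a probability vector satisfying $\tilde{\pi}M_{k,\ell} = \tilde{\pi}$, where $M_{k,\ell}$ is the stochastic matrix of $\mathcal{A}_{k,\ell}$; uniqueness then forces $\tilde{\pi}$ to be \emph{the} stationary distribution. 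Normalization is immediate: for each fixed $q$ there are $|A|^k|B|^{\ell}$ states $(q,u,v)$, each carrying mass $\pi(q)/(|A|^k|B|^{\ell})$, so $\sum_{(q,u,v)}\tilde{\pi}(q,u,v)=\sum_{q\in Q}\pi(q)=1$.

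The heart of the argument is verifying stationarity entry by entry. First I would recall the transition structure of $\mathcal{A}_{k,\ell}$, assuming via Lemma~\ref{lem:normalize} that $\mathcal{A}$ is normalized so every transition reads from a single tape. From a state $(q,u,v)$ with $q$ of type~I, the automaton consumes the head symbol $u[1]$, follows the unique transition $q \trans{u[1],\emptyword} q'$ of~$\mathcal{A}$, shifts the first buffer, and appends a fresh symbol $a\in A$ to reach $(q',u[2..k]a,v)$; this transition carries weight $1/|A|$, and the type~II case is symmetric with weight $1/|B|$. The crucial observation is the asymmetric role of the two symbols involved: when computing the inflow into a fixed target $(q',u',v')$, the appended fresh symbol is \emph{forced} to equal $u'[k]$ (respectively $v'[\ell]$), contributing a single factor $1/|A|$ (respectively $1/|B|$), whereas the consumed head symbol ranges freely over the alphabet and selects which transition of $\mathcal{A}$ into $q'$ is used.

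Carrying this out, the type~I predecessors of $(q',u',v')$ are exactly the states $(q,\,c\cdot u'[1..k-1],\,v')$ with $q$ of type~I and $q \trans{c,\emptyword} q'$ a transition of~$\mathcal{A}$. Summing, the type~I contribution to $(\tilde{\pi}M_{k,\ell})(q',u',v')$ equals
\[
\frac{1}{|A|^k|B|^{\ell}}\sum_{\substack{c\in A,\ q\ \text{of type I}\\ q \trans{c,\emptyword} q'}}\pi(q)\,\frac{1}{|A|},
\]
and the inner sum is precisely the type~I inflow into $q'$ in the Markov chain of~$\mathcal{A}$. Adding the symmetric type~II contribution, the total inflow is $1/(|A|^k|B|^{\ell})$ times the full inflow $\sum_{q}\pi(q)M(q,q')$, which equals $\pi(q')$ by the stationarity $\pi M=\pi$ of~$\mathcal{A}$. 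Hence $(\tilde{\pi}M_{k,\ell})(q',u',v')=\pi(q')/(|A|^k|B|^{\ell})=\tilde{\pi}(q',u',v')$, as required.

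The main obstacle is purely the bookkeeping in this predecessor count: one must keep the consumed head symbol and the freshly appended tail symbol conceptually separate, recognize that the target buffer pins down the latter while the former is summed out against the transition relation of~$\mathcal{A}$, and thereby see the sum collapse onto $\mathcal{A}$'s own balance equation. Once the roles of these two symbols are correctly assigned, both normalization and stationarity are immediate, and uniqueness finishes the proof.
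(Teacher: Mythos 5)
Your proof is correct and follows essentially the same route as the paper's: identify the predecessors of a target state $(q',u',v')$ via the shift structure of $\mathcal{A}_{k,\ell}$, check the balance equation $\tilde{\pi}M_{k,\ell}=\tilde{\pi}$ using $\pi M=\pi$, and invoke uniqueness of the stationary distribution via Lemma~\ref{lemma:1}. The paper's own proof is only a sketch that stops after setting up the transition correspondence; your write-up carries out the inflow computation it leaves implicit.
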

\begin{proof}
  Le $M = (m_{p,q})$ be the $Q \times Q$-matrix of~$\mathcal{A}$.  Each
  entry $m_{p,q}$ is equal to either
  \begin{displaymath}
    |\{ a : p \trans{a,\emptyword} q \}|/|A| 
    \text{ or } 
    |\{ b : p \trans{\emptyword,b} q \}|/|B| 
  \end{displaymath}
  The vector $\pi$ is the unique vector satisfying $\pi M = \pi$ 
  and $\sum_{q\in Q}{\pi(q)} = 1$.  
  Let $(q,u,v)$ be a fixed state.  For each transition $p
  \trans{a,\emptyword} q$, there is a transition 
  $(p,au',v) \trans{a',\emptyword} (q,u,v)$ where $u = u'a'$ ($u'$ is the
  prefix of length $k-1$ of~$u$ and $a$ is its last symbol).
\end{proof}

\subsection{Selecting} \label{sec:selector}

We present the definition of a selector that we use to characterize
finite-state independence of normal words, to be given in
Theorem~\ref{thm:charac}.  Given a normal infinite word, the problem of
selection is how to select symbols from an infinite word so that the word
defined by the selected symbols satisfies a designated property.  An early
result of Wall~\cite{Wall49} shows that selecting the symbols of a normal
word in the positions given by an arithmetical progression yields again a
normal word.  Agafonov~\cite{Agafonov68} extended Wall's result and proved
that any selection by finite automata preserves normality (a complete proof
can be found in~\cite[Theorem 7.1]{BecherCartonHeiber15}).  The selections
admitted by Agafonov must be performed by an oblivious $1$-deterministic
$2$-automaton.  Oblivious means that the choice of selecting or not the
next symbol only depends on the current state and not on the next symbol.

Other forms of selection by finite-automata do not preserve normality. For
instance~\cite[Theorem 7.3]{BecherCartonHeiber15} shows that the
two-sided selection rule ``select symbols in between two zeroes'' from $x$,
does not preserve normality.  

In order to characterize finite-state independence we consider selection by
a finite automaton from an infinite word, conditioned to another infinite
word that can be used in the selection process as an oracle.

\begin{definition}
  A \emph{selector} is a $2$-deterministic $3$-automaton such that each of
  its transitions has one of the types $p \trans{a,\emptyword|a} q$ (type~I),
  $p \trans{a,\emptyword|\emptyword} q$ (type~II), or
  $p \trans{\emptyword,b|\emptyword} q$ (type~III) for two symbols
  $a,b \in A$.  It is \emph{oblivious} if all transitions starting at a
  given state have the same type.
\end{definition}

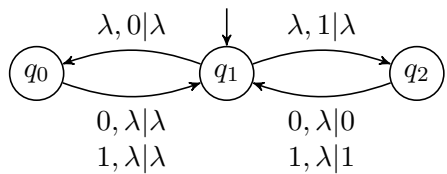
\begin{figure}[htbp]
    \begin{center}
    \begin{tikzpicture}[->,>=stealth',initial text=,semithick,auto,inner sep=3pt]
      \tikzstyle{every state}=[minimum size=0.4]
      \node[state] (q0) at (0,0)  {$q_0$};
      \node[state,initial above]  (q1) at (2.5,0) {$q_1$};
      \node[state]  (q2) at (5,0) {$q_2$};
      \path (q1) edge[bend right=20,above] node {$\emptyword,0|\emptyword$} (q0);
      \path (q1) edge[bend left=20] node {$\emptyword,1|\emptyword$} (q2);
      \path (q0) edge[bend right=20,below] node {$\begin{array}{c} 
                                            0,\emptyword|\emptyword \\
                                            1,\emptyword|\emptyword
                                           \end{array}$} (q1);
      \path (q2) edge[bend left=20] node {$\begin{array}{c} 
                                            0,\emptyword|0 \\
                                            1,\emptyword|1
                                           \end{array}$} (q1);
    \end{tikzpicture}
  \end{center}
\vspace*{-0.8cm}
  \caption{An oblivious selector}
  \label{fig:selector}
\end{figure}

A transition of type $p \trans{a,\emptyword|a} q$ (type~I) copies a
symbol from the first input~$x$ to the output tape.  A transition of the
types $p \trans{a,\emptyword|\emptyword} q$ (type~II) or
$p \trans{\emptyword,b|\emptyword} q$ (type~III) skips a symbol from either
the first input~$x$ or the second input~$y$. It follows then that the
output word~$z = \mathcal{S}(x,y)$ is obtained by selecting symbols
from~$x$.  This justifies the terminology.

Since a selector is $2$-deterministic, all transitions starting at a
given state either have type I and~II or have type~III.  When it is 
oblivious it is not possible anymore that two transitions starting at
the same state have types I and~II.  Whether or not a symbol is copied from 
the first input tape to the output tape only depends on the state and 
not on the symbol.

The automaton pictured in Figure~\ref{fig:selector} is an oblivious
selector.  It selects symbols from the first input~$x$ which are at
a position where there is a symbol~$1$ in the second input~$y$.

\subsection{Shuffling}

We present the definition of a shuffler we use to characterize finite-state
independence of normal words in Theorem~\ref{thm:charac}. A general
presentation of shufflers can be read in~\cite{Pin1989}.  An infinite
word~$z$ is the shuffle of $x$ and~$y$ if it can be factorized as
$z = u_1v_1u_2v_2u_3\cdots$ where the sequences of words $(u_i)_{i\ge1}$
and $(v_i)_{i\ge1}$ satisfy $x = u_1u_2u_3\cdots$ and
$y = v_1v_2v_3\cdots$.  We restrict to shuffles of words on the same
alphabet, done by $2$-deterministic $3$-automata.  We prove that if
$x$ and~$y$ are normal words, $x$ and $y$ are finite-state independent
exactly when any shuffle of them is also normal.  The interleaving of the
symbols from $x$ and $y$ must be driven by a deterministic and oblivious
automaton reading $x$ and~$y$.  Here oblivious means that the choice of
inserting in the shuffled word~$z$ a symbol either from~$x$ or from~$y$ is
only made upon the current state of the automaton and not upon the current
symbols read from $x$ and~$y$.

\begin{definition}
  A \emph{shuffler} is a $2$-deterministic $3$-automaton such that each
  of its transitions has either the type $p \trans{a,\emptyword|a} q$
  (type~I) or the type $p \trans{\emptyword,a|a} q$ (type~II).
\end{definition}

Notice that the determinism of a shuffler~$\mathcal{S}$ implies that
for each of its states~$p$, all the transitions leaving~$p$ have the same type,
either type~I or type~II.  A transition of type~I copies a symbol from the
first input~$x$ to the  output and a transition of type~II copies a
symbol from the second input~$y$ to the  output.  It follows then that
the third word~$z = \mathcal{S}(x,y)$ is obtained by shuffling $x$ and~$y$.
This justifies the terminology.

\begin{figure}[htbp]
    \begin{center}
    \begin{tikzpicture}[->,>=stealth',initial text=,semithick,auto,inner sep=3pt]
      \tikzstyle{every state}=[minimum size=0.4]
      \node[state,initial above] (q0) at (0,0)  {$q_0$};
      \node[state]  (q1) at (3,0) {$q_1$};
      \path (q0) edge[out=210,in=150,loop] node {$0,\emptyword|0$} ();
      \path (q0) edge[bend left=20] node {$1,\emptyword|1$} (q1);
      \path (q1) edge[bend left=20] node {$\emptyword,1|1$} (q0);
      \path (q1) edge[out=30,in=-30,loop] node {$\emptyword,0|0$} ();
    \end{tikzpicture}
  \end{center}
\vspace*{-0.8cm}
  \caption{A shuffler}
  \label{fig:shuffler}
\end{figure}
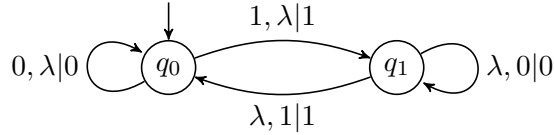

Consider infinite words $x = \overline{0011010001}\cdots$ and
$y = \underline{01000110001}\cdots $ and let $\mathcal{S}$ be the shuffler
pictured in Figure~\ref{fig:shuffler}.  Then, the infinite word
$z = \mathcal{S}(x,y)$ has the form
\begin{displaymath}
  z = \overline{001}\underline{01}\overline{1}
      \underline{0001}\overline{01}\underline{1}
      \overline{0001}\underline{0001}\cdots
\end{displaymath}
where the underlines and the overlines have been added to mark the origin
of each symbol.

If two normal words $x$ and~$y$ are on different  alphabets then,
in general, their shuffling $\mathcal{S}(x,y)$ is not normal.  
For instance, if $x$ and $y$ are words on different alphabets
their join is not normal.   Thus, we assume now a unique alphabet. 

Exchanging the input and output tapes of a shuffler~$\mathcal{S}$ gives
a $1$-deterministic $3$-automaton that we call the \emph{splitter}
corresponding to~$\mathcal{S}$.  This is due to the very special form of
the transitions of shufflers.  If the output $z = \mathcal{S}(x,y)$ of the
shuffler~$\mathcal{S}$ on inputs $x$ and~$y$ is fed to the corresponding
splitter, the two outputs are $x$ and~$y$.  The fact that the corresponding
splitter is $1$-deterministic yields the following lemma which really
requires that the alphabets on the two tapes are equal.

\begin{lemma} \label{lem:shuffler} 
  Let $\mathcal{S}$ be a shuffler and $q$ one of its states.  For each
  finite word~$w$, there is exactly one run of length~$|w|$ starting at~$q$
  and outputting~$w$.
\end{lemma}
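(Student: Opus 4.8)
The plan is to argue by induction on the length $|w|$, exploiting the duality between the shuffler and its corresponding splitter. First I would record the structural fact already noted for shufflers: since $\mathcal{S}$ is $2$-deterministic, all transitions leaving a fixed state share the same type, either type~I or type~II. Moreover, each transition of a shuffler outputs exactly one symbol, so any run outputting~$w$ has length exactly~$|w|$; hence the length constraint in the statement is automatic and I only need to count runs from~$q$ whose output equals~$w$. The induction proceeds by peeling off the first symbol of the output word.

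For the base case $w = \emptyword$, the only run of length~$0$ starting at~$q$ is the empty run, which outputs~$\emptyword$, so there is exactly one. For the inductive step write $w = cw'$ with $c \in A$ and $|w'| = |w|-1$. Any run outputting~$w$ must begin with a single transition from~$q$ that emits~$c$, followed by a run from the resulting state outputting~$w'$. The heart of the argument is that the output symbol determines this first transition uniquely: if $q$ has type~I its outgoing transitions are $q \trans{a,\emptyword|a} q_a$, and if it has type~II they are $q \trans{\emptyword,a|a} q_a$, so in either case the emitted symbol equals the symbol read, and by $2$-determinism distinct outgoing transitions emit distinct symbols. Because the two input alphabets coincide, there is a transition emitting the required symbol~$c$, and it is unique, say with target~$q'$. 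Applying the induction hypothesis to~$q'$ and~$w'$ yields exactly one continuation, and concatenation gives exactly one run from~$q$ outputting~$w$.

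Phrased through the splitter, this is precisely the statement that the splitter corresponding to~$\mathcal{S}$, being $1$-deterministic and $1$-complete, admits for every input word~$w$ and every start state~$q$ exactly one run reading~$w$; transferring that run back across the tape exchange produces the unique shuffler run outputting~$w$. I expect the main point requiring care to be existence rather than uniqueness: uniqueness is immediate from $2$-determinism, whereas existence needs that every output symbol can actually be emitted from every state, i.e. that the relevant transition is present. This is exactly the completeness that equal alphabets guarantee. On distinct alphabets an output symbol belonging to the alphabet of one tape but not the other could not be routed, the corresponding splitter would fail to be $1$-complete, and the run outputting~$w$ need not exist; this is why the lemma genuinely requires that the alphabets on the two tapes are equal.
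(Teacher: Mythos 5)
Your proof is correct and takes essentially the same route as the paper: the paper states the lemma without a displayed proof, justifying it only by the observation that the corresponding splitter is $1$-deterministic (and, implicitly, $1$-complete), which is exactly the symbol-by-symbol induction you carry out, with uniqueness from $2$-determinism and existence from completeness. Your closing remark also matches the paper's own caveat that the lemma ``really requires that the alphabets on the two tapes are equal,'' since otherwise a state of one type could not emit a symbol belonging only to the other tape's alphabet.
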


\section{Proof of the  Characterization Theorem} \label{sec:proof}

\subsection{From independence to fairness  and back}

We first prove that finite-state independence implies fairness of the run in each
$2$-deterministic $2$-automata.

\begin{proof}[Proof of Theorem~\ref{thm:charac}, (\ref{sta:independence}) implies (\ref{sta:fairness})]
  For simplicity we assume that $A$ is the binary alphabet $\{ 0 ,1\}$ but
  the proof can easily be extended to the general case. We suppose by
  contradiction that there is a $2$-deterministic
  $2$-automaton~$\mathcal{A}$ such that the run on $x$ and~$y$
  in~$\mathcal{A}$ is not fair for edges and we claim that $x$ and $y$ are
  not finite-state  independent.

  By definition, each transition of~$\mathcal{A}$ is of the form
  $p \trans{a,\emptyword} q$ or $q \trans{\emptyword,b} q$ for some symbols
  $a$ and~$b$. For the rest of the proof, transitions of the form
  $p \trans{a,\emptyword} q$ are called of type~I and transitions of the
  form $p \trans{\emptyword,b} q$ are called of type~II.  Since the
  automaton is deterministic, all the transitions starting at each
  state~$q$ have the same type.  A state $q$ is said to be of type~I
  (respectively II) if all transitions starting at~$q$ have type~I
  (respectively II).

  We suppose that there is exists a state~$p$ and two transitions $\sigma$
  and~$\sigma'$ starting from~$p$ such that
  \begin{displaymath}
    \lim_{n\to\infty}{\frac{\occ{\gamma[1..n]}{\sigma}}{n}} \neq
    \lim_{n\to\infty}{\frac{\occ{\gamma[1..n]}{\sigma'}}{n}}.
  \end{displaymath}
  meaning that either at one of the two limits does not exist or that they
  both exist but they are not equal.  By symmetry it can be assumed that
  all transitions starting at~$p$, including $\sigma$ and~$\sigma'$, are of
  type~I.

  We show that $x$ can be compressed given~$y$.  There is a lack of
  symmetry between $x$ and~$y$ because transitions $\sigma$ and~$\sigma'$
  are of type~I.  By replacing $(k_n)_{n\ge0}$ by one of its subsequences,
  it can be assumed that, for each transition~$\tau$,
  $\lim_{n\to\infty}{\occ{\gamma[1..k_n]}{\tau}/k_n}$ exists and that
  $\lim_{n\to\infty}{\occ{\gamma[1..k_n]}{\sigma}/k_n} \neq
  \lim_{n\to\infty}{\occ{\gamma[1..k_n]}{\sigma'}/k_n}$.  Since the
  frequency of each state is equal to the sum of the frequencies of the
  transitions which start at it, the limit
  $\lim_{n\to\infty}{\occ{\gamma[1..k_n]}{q}/k_n}$ exists for each
  state~$q$.  Denote this limit by~$\pi(q)$.
 
  For each transition~$\tau$ starting at a state~$q$, let $\pi(\tau)$
  be defined as follows.
  \begin{displaymath}
    \pi(\tau) = 
    \begin{cases}\displaystyle
      \lim_{n\to\infty}\frac{\occ{\gamma[1..k_n]}{\tau}}{\occ{\gamma[1..k_n]}{q}}
      & \text{if $\lim_{n\to\infty}{\occ{\gamma[1..k_n]}{q}/k_n} \neq 0$} \\
     \displaystyle \frac{1}{2} & \text{otherwise}
    \end{cases}
  \end{displaymath}
  Since
  $\lim_{n\to\infty}{\occ{\gamma[1..k_n]}{\sigma}/k_n} \neq
  \lim_{n\to\infty}{\occ{\gamma[1..k_n]}{\sigma'}/k_n}$,
  $\pi(\sigma) \neq \pi(\sigma')$.  Furthermore, the following equality
  holds for each state~$q$.
  \begin{displaymath}
    \sum_{\text{$\tau$ starts at~$q$}}\pi(\tau) = 1.
  \end{displaymath}
  Since $x$ is normal it suffices to show that $\rho(x/y)<1$.  Let $\ell$
  be a block length to be fix later.  Let $\gamma$ be a finite run of
  length~$\ell$, so $\gamma$ is a sequence
  $\tau_1\tau_2 \cdots \tau_{\ell}$ of $\ell$ consecutive transitions.
  Let $\pi(\gamma)$ be defined as follows.
  \begin{displaymath}
    \pi(\gamma) = 
    \begin{cases}\displaystyle
      \prod_{\substack{\text{$\tau_i$ of type I}\\1\leq i\leq \ell}}{\pi(\tau_i)}
        & \text{if $\gamma$ has transition of type I} \\
      1 & \text{otherwise}
    \end{cases}
  \end{displaymath}
  Let $q$ be a state and $\bar{v}$ be a word of length~$\ell$.  Let
  $\Gamma_{q,\bar{v}}$ be the set of runs of length~$\ell$, starting at~$q$
  and reading a prefix of~$\bar{v}$ on the second tape,
  \begin{displaymath}
    \Gamma_{q,\bar{v}} = \{ \gamma : \gamma = q \trans{u,v} q', 
                                   v \sqsubset \bar{v}, |u|+|v| = \ell \}.
  \end{displaymath}
  Notice  that the sets $\Gamma_{q,\bar{v}}$ are not 
  always pairwise disjoint.  The word~$v$ read by the run~$\gamma$ on
  the second tape can be the prefix of several words $\bar{v}$.  If $v$ is
  the prefix of both $\bar{v}$ and~$\bar{v}'$, then the run~$\gamma$
  belongs to both $\Gamma_{q,\bar{v}}$ and~$\Gamma_{q,\bar{v}'}$.

  We claim that for each state~$q$ and each word~$\bar{v}$,
  \begin{displaymath}
    \sum_{\gamma \in \Gamma_{q,\bar{v}}}{\pi(\gamma)} = 1.
  \end{displaymath}
  We prove it by induction on the length of the run, that we call~$\ell$.
  If $\ell = 0$, the only run $\gamma \in~\Gamma_{q,\bar{v}}$ is the empty
  run so $\pi(\gamma)=1$.  Suppose now that $\ell \ge 1$.  We distinguish
  two cases.  First case: the transitions starting at $q$ are of type I.
  Suppose first that the transitions starting at~$q$ are the two
  transitions $\tau_0 = q \trans{0,\emptyword} q_0$ and
  $\tau_1 = q \trans{1,\emptyword} q_1$.  And suppose that
  $\bar{v} = \bar{v}'a$ where $\bar{v}' = \bar{v}[1..\ell-1]$ and $a$ is
  the last symbol of~$\bar{v}$.  The set $\Gamma_{q,\bar{v}}$ is then equal
  to the disjoint union
  $\Gamma_{q,\bar{v}} = \tau_0\Gamma_{q_0,\bar{v}'} \cup
  \tau_1\Gamma_{q_1,\bar{v}'}$.  The result follows from the inductive
  hypothesis since
  $\pi(\Gamma_{q,\bar{v}}) = \pi(\tau_0)\pi(\Gamma_{q_0,\bar{v}'}) +
  \pi(\tau_1)\pi(\Gamma_{q_1,\bar{v}'}) = \pi(\tau_0) + \pi(\tau_1) = 1$.
  Second case: the transitions starting at~$q$ have type~II.  Suppose that
  $\bar{v} = a\bar{v}'$ where $a$ is the first symbol of~$\bar{v}$ and
  $\bar{v}' = \bar{v}[2..\ell]$.  The transition
  $\tau = q \trans{\emptyword,a} q'$ is the first transition of each run
  in~$\Gamma_{q,\bar{v}}$ and
  $\Gamma_{q,\bar{v}} = \tau\Gamma_{q',\bar{v}'}$.  The result follows from
  the inductive hypothesis since
  $\pi(\Gamma_{q,\bar{v}}) = \pi(\Gamma_{q',\bar{v}'}) = 1$.  Since
  $\sum_{\gamma \in \Gamma_{q,\bar{v}}}{\pi(\gamma)} = 1$, there exists,
  for each state~$q$ and each word~$\bar{v}$, a prefix-free set
  $P_{q,\bar{v}} = \{w_{\gamma,\bar{v}} : \gamma \in \Gamma_{q,\bar{v}}\}$
  such that $|w_{\gamma,\bar{v}}| \le \lceil -\log \pi(\gamma) \rceil$
  holds for each run $\gamma \in \Gamma_{q,\bar{v}}$.  These words can be
  used to define a compressor~$\mathcal{C}$ which runs as follows on two
  inputs.  It simulates~$\mathcal{A}$ and it has $\ell$ symbols of look
  ahead on the second tape.  For each run~$\gamma$ of length~$\ell$, the
  compressor outputs $w_{\gamma,\bar{v}}$ on the third tape.  The choice
  of~$w_{\gamma,\bar{v}}$ depends on the look ahead~$\bar{v}$.

    We finally show that $\rho_{\mathcal{C}}(x/y) < 1$.
  The run $\gamma$ of~$\mathcal{A}$ on $x$ and~$y$ can be factorized
  as $\gamma = \gamma_1 \gamma_2 \gamma_3 \cdots$ where each run~$\gamma_i$
  has length~$\ell$.  The output of the compressor~$\mathcal{C}$ is then
  $w_{\gamma_1,\bar{v}_1}w_{\gamma_2,\bar{v}_2}w_{\gamma_3,\bar{v}_3} \cdots$
  where the words $\bar{v}_1, \bar{v}_2, \bar{v}_3, \ldots$ are the
  corresponding look ahead of $\ell$ symbols.  Let $\varepsilon,\delta>0$ be
  two positive real numbers.  Let $n$ be an integer large enough  such that
  $\occ{\gamma[1..k_n]}{\tau} \le (1+\delta) \pi(q)\pi(\tau)k_n$ for each
  transition~$\tau$ starting at~$q$. Then,
  \begin{align*}
    |w_{\gamma_1,\bar{v}_1}\cdots w_{\gamma_n,\bar{v}_n}| 
    & \le \sum_{i = 1}^n \lceil -\log \pi(\gamma_i) \rceil \\
    & \le n + \sum_{i = 1}^n -\log \pi(\gamma_i)  \\
    & \le n + \sum_{\text{$\tau$ of type I}}\occ{\gamma[1..\ell n]}{\tau}
                  \log \frac{1}{\pi(\tau)}  \\
     & \le \ell n \left[\frac{1}{\ell}+ (1+\delta)
         \sum_{\text{$q$ of type I}}\pi(q) 
         \sum_{\text{$\tau$ starts at~$q$}}\pi(\tau)
         \log \frac{1}{\pi(\tau)} \right] 
  \end{align*}
  Then, for each state~$q$, 
  \begin{displaymath}
  \sum_{\text{$\tau$ starts at~$q$}}\pi(\tau)\log \frac{1}{\pi(\tau)}
   \le 1
  \end{displaymath}
  and the relation is strict for $q = p$.  Since $\pi(p) > 0$, for
  $\varepsilon$ small enough, $\delta$ and $\ell$ can be chosen such that
  \begin{displaymath}
    \frac{1}{\ell}+ (1+\delta)\sum_{\text{$q$ of type I}}\pi(q)
    \sum_{\text{$\tau$ starts at~$q$}}\pi(\tau)\log \frac{1}{\pi(\tau)}
    \le (1-\varepsilon) \sum_{\text{$q$ of type I}}\pi(q).
  \end{displaymath}
  We obtain 
  \begin{displaymath}
    |w_{\gamma_1,\bar{v}_1}\cdots w_{\gamma_{k_n},\bar{v}_{k_n}}| \le 
    (1-\varepsilon)\ell k_n\sum_{\text{$q$ of type I}}\pi(q).
  \end{displaymath}
  Since $\sum_{\text{$q$ of type I}}\pi(q)$ is the limit of the ratio
  between the number of symbols read from~$x$ and the length of the run, we
  conclude $\rho_{\mathcal C}(x/y)<1$.
\end{proof}

We now prove  that fairness of the run in each $2$-deterministic
$2$-automata implies finite-state  independence.

\begin{proof}[Proof of Theorem~\ref{thm:charac}, (\ref{sta:fairness}) implies (\ref{sta:independence})]
  Let $x$ and~$y$ be two normal words such that
  statement~(\ref{sta:fairness}) of Theorem~\ref{thm:charac} holds.  We
  show that $x$ and~$y$ are finite-state independent.  It is sufficient to
  show that $x$ cannot be compressed with the help of~$y$, since the other
  incompressibility result is obtained by exchanging the roles of $x$
  and~$y$.

  Let $\mathcal{C}$ be a $2$-deterministic $3$-automaton such that for
  each~$y$, the function $x \mapsto \mathcal{C}(x,y)$ is one-to-one. By
  Lemma~\ref{lem:fairedgesscc}, the automaton~$\mathcal{C}$ can be assumed
  to be strongly connected.  Let $q_0$ be the initial state
  of~$\mathcal{C}$.  Let $\gamma$ be the run of~$\mathcal{C}$ on $x$
  and~$y$ and let $z$ be the output of~$\mathcal{C}$ along~$\gamma$, that
  is, $z = \mathcal{C}(x,y)$.  Let $\varepsilon > 0$ be a positive real
  number.  We claim that the compression ratio $\rho_{\mathcal{C}}(x/y)$
  satisfies $\rho_{\mathcal{C}}(x/y) > 1 - \varepsilon$. Since this holds
  for each $\varepsilon > 0$, this shows that
  $\rho_{\mathcal{C}}(x/y) \ge 1$.
  
  Let $k$ be a positive  integer to be fixed later.   Since $y$ is normal,
  there exists a constant~$K > 0$ such that if $u \sqsubset x$, $v \sqsubset
  y$ and $w \sqsubset z$ ($u$, $v$ and~$w$ are prefixes of $x$, $y$ and~$z$
  respectively) such~that 
  \begin{displaymath}
    q_0 \trans{u,v|w} q
  \end{displaymath}
  then $|v| \le K|u|$, see \cite[Lemma~5.3]{BCH2016}.
  The run~$\gamma$ is decomposed
  \begin{displaymath}
    q_0 \trans{u_1,v_1|w_1} 
    q_1 \trans{u_2,v_2|w_2} 
    q_2 \trans{u_3,v_3|w_3} \cdots
  \end{displaymath}
  where $|u_i| = k$ for each integer $i \ge 1$. Note that the lengths of each
  word~$v_i$ and each word~$w_i$ are arbitrary.  Our aim is to prove that for
  $N$ large enough $|w_1 \cdots w_N| \ge (1-4\varepsilon) |u_1 \cdots u_N|$.

  Let $\ell$ be the integer $\lceil kK/\varepsilon \rceil$.  By definition
  of~$\ell$, the cardinality of the set $\{ i \le N : |v_i| > \ell \}$ is
  less than $\varepsilon N$.  Otherwise we would have
  $|v_1 \cdots v_N| > K|u_1 \cdots u_N|$ which contradicts the definition of
  the constant~$K$.
  The indices $i$ such that $|v_i| > \ell$ are ignored in the sequel.

  Let $v'_i$ be the prefix of length~$\ell$ of the infinite
  word~$v_iv_{i+1}v_{i+2} \cdots$.  Unless $|v_i| > \ell$, $v_i$ is a
  prefix of~$v'_i$.  Let $v' \in B^{\ell}$ be a fixed word of
  length~$\ell$.  We claim that the cardinality of the set
  \begin{displaymath}
    X_{v'} = \{ u \in A^k : \exists p,q \;\; p \trans{u,v|w} q,  v \sqsubset v'
     \text{ and } |w|<(1-\varepsilon)k\}
  \end{displaymath}
  is bounded by $(\ell+1)|Q|^2|A|^{k(1-\varepsilon)}$.  For each choice of
  $p$, $q$, $v$ and $w$, there is at most one possible~$u$.  Otherwise, the
  function $x \mapsto \mathcal{C}(x,y)$ would not be one-to-one.  The terms
  $(\ell+1)$, $|Q|^2$ and $|A|^{k(1-\varepsilon)}$ account respectively for
  the number of choices for $v$, $p$ and~$q$, and $w$.  Note that the
  number of choices of~$v$ is $\ell+1$ because $v$ is a prefix of the fixed
  word~$v'$ of length~$\ell$.  The integer~$k$ is chosen such that
  $|A|^k-(\ell+1)|Q|^2|A|^{k(1-\varepsilon)}$ is greater than
  $(1-\varepsilon)|A|^k$.  This is possible because $|Q|$ is constant and
  $\ell$ grows linearly with~$k$.

  By fairness and by Lemma~\ref{lemma:2}, it follows that for $N$ great enough
  and for any words $u \in A^k$ and $v' \in A^{\ell}$
  \begin{displaymath}
    \#\{i : u_i = u \text{ and } v'_i = v' \} \ge (1-\varepsilon)N/|A|^{k+\ell}.
  \end{displaymath}
  Summing up for all $u \notin X_{v'}$ and all $v' \in A^{\ell}$ gives that
  for $N$ great enough 
  \begin{displaymath}
    \#\{i : u_i \notin X_{v'_i} \} \ge  (1-\varepsilon)^2N,
  \end{displaymath}
  and subtracting the number of~$i$ such that $|v_i| \ge \ell$ gives
   \begin{displaymath}
    \#\{i : u_i \notin X_{v'_i} \text{ and } v_i \sqsubset v'_i\} \ge
    [(1-\varepsilon)^2-\varepsilon]N \ge  (1-3\varepsilon)N.
  \end{displaymath}
  For each $i$ in the previous set, $w_i \ge (1-\varepsilon)k$.  Therefore
  the length of the output $w_1 \cdots w_N$ is at least 
  $(1-3\varepsilon)(1-\varepsilon)kN \ge (1-4\varepsilon)kN$.
  This completes the proof since the length of the input $u_1 \cdots u_N$
  is $kN$.
\end{proof}

\subsection{From independence/fairness to selecting and back}

\begin{proof}[Proof of Theorem~\ref{thm:charac}, (\ref{sta:independence}) implies (\ref{sta:selector})] 
  We need to prove that selection from a normal word~$x$ with a
  finite-state independent normal oracle~$y$ preserves normality.  Mutatis mutandis this
  proof is the same as that given
  in~\cite[Theorem~7.1]{BecherCartonHeiber15}, but now one should consider
  $2$-deterministic $3$-automata, and the normal word $y$ as a consultative
  oracle.
\end{proof}

\begin{proof}[Proof of Theorem~\ref{thm:charac}, (\ref{sta:selector}) implies 
(\ref{sta:fairness})]
  Suppose  that fairness does not hold.  By Proposition~\ref{pro:fairness},
  there is a $2$-deterministic automaton~$\mathcal{A}$ with the following
  property.  Let $\gamma$ be the run of~$\mathcal{A}$ on $x$ and~$y$.
  There are in~$\mathcal{A}$ and two transitions $\tau$ and~$\tau'$
  starting from the same state~$p$ and an increasing sequence
  $(k_n)_{n\ge0}$ of integers such that
  \begin{displaymath}
    \lim_{n\to\infty}{\frac{\occ{\gamma[1..k_n]}{\tau}}{n}}      \neq 
    \lim_{n\to\infty}{\frac{\occ{\gamma[1..k_n]}{\tau'}}{n}}.
  \end{displaymath}
  Since $\mathcal{A}$ is $2$-deterministic, all transitions starting at~$q$
  read symbols from the same tape.  The automaton~$\mathcal{A}$ can be
  turned into a selector~$\mathcal{S}$ as follows.  Transitions starting
  at~$p$ select the digit they read but all other transitions do not select
  the digit they read.  The previous inequality shows that the output of
  the selector~$\mathcal{S}$ is not even simply normal.  This is a
  contradiction with the hypothesis.
\end{proof}

We end this section with the following result that shows that the 
finite-state independence of two normal words implies the finite-state  independence of one and a 
word that results from selection of the other.

\begin{proposition}
  Let $x$ and $y$ be normal and finite-state independent words.  If $y'$ is obtained by
  oblivious selection from~$y$, then $x$ and~$y'$ are still finite-state independent.
\end{proposition}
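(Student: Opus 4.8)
The plan is to prove the claim through the selector characterization, item~(\ref{selector}) of Theorem~\ref{thm:charac}. This characterization is available for the pair $(x,y')$ because both words are normal: $x$ is normal by hypothesis, and $y'$ is normal since oblivious selection preserves normality (this is precisely the fact invoked in the proof of (1) $\Rightarrow$ (\ref{selector}), ultimately Agafonov's theorem). Write $\mathcal{T}$ for the oblivious selector with $y'=\mathcal{T}(y)$. By item~(\ref{selector}) it therefore suffices to show that for \emph{every} oblivious selector $\mathcal{S}$ both outputs $\mathcal{S}(x,y')$ and $\mathcal{S}(y',x)$ are normal, and then conclude independence of $x$ and $y'$.

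The key idea is to precompose $\mathcal{S}$ with $\mathcal{T}$, converting an oblivious selection that reads or consults the derived word $y'$ into one that reads or consults the original word $y$, for which normality of the output is guaranteed by the independence of $x$ and $y$. First I would treat $\mathcal{S}(x,y')$, the selection of symbols of $x$ with oracle $y'$. I build a $2$-deterministic $3$-automaton $\mathcal{S}'$ whose states are pairs (state of $\mathcal{S}$, state of $\mathcal{T}$), with first input $x$ and oracle $y$: whenever $\mathcal{S}$ reads from $x$, $\mathcal{S}'$ performs the identical type~I or type~II transition and leaves the $\mathcal{T}$-component fixed; whenever $\mathcal{S}$ is about to consult its oracle, $\mathcal{S}'$ instead reads one symbol of $y$ and advances $\mathcal{T}$, advancing the $\mathcal{S}$-component only when $\mathcal{T}$ selects that symbol (so that it is the next symbol of $y'$) and leaving it fixed otherwise. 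The sequence of $x$-symbols copied to the output is unchanged, so $\mathcal{S}'(x,y)=\mathcal{S}(x,y')$. For $\mathcal{S}(y',x)$ I would carry out the symmetric construction $\mathcal{S}''$, now with first input $y$ and oracle $x$: transitions of $\mathcal{S}$ that read its oracle $x$ are copied verbatim, while transitions of $\mathcal{S}$ that read $y'$ are simulated by reading $y$ through $\mathcal{T}$, skipping by type~II the symbols of $y$ that $\mathcal{T}$ discards and letting $\mathcal{S}$ act on the selected ones; again $\mathcal{S}''(y,x)=\mathcal{S}(y',x)$.

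The heart of the argument, and the step I expect to be the main obstacle, is verifying that $\mathcal{S}'$ and $\mathcal{S}''$ are genuine \emph{oblivious} selectors, so that item~(\ref{selector}) can be applied to them. This is exactly where the two obliviousness hypotheses must be combined. At a state $(s,t)$ of $\mathcal{S}'$ where $\mathcal{S}$ consults its oracle, whether $\mathcal{S}'$ merely advances $\mathcal{T}$ or also advances $\mathcal{S}$ is decided by whether $\mathcal{T}$ selects in state $t$, which by obliviousness of $\mathcal{T}$ depends on $t$ alone; hence all transitions leaving $(s,t)$ are of type~III and $\mathcal{S}'$ is oblivious. For $\mathcal{S}''$, at a state $(s,t)$ where $\mathcal{S}$ reads $y'$, the transition is of type~II exactly when $\mathcal{T}$ discards in state $t$, and otherwise has the type of $\mathcal{S}$ in state $s$; by obliviousness of $\mathcal{T}$ and of $\mathcal{S}$ this type is determined by $(s,t)$ alone, so $\mathcal{S}''$ is oblivious. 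I would also record that the outputs are infinite whenever those of $\mathcal{S}$ are, since $y'$ is infinite and $\mathcal{T}$ hence selects infinitely often, so the runs in question are accepting.

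Finally, because $x$ and $y$ are independent normal words, item~(\ref{selector}) applied to the pair $(x,y)$ shows that $\mathcal{S}'(x,y)$ and $\mathcal{S}''(y,x)$ are normal; by the two equalities above, $\mathcal{S}(x,y')$ and $\mathcal{S}(y',x)$ are normal. As $\mathcal{S}$ was arbitrary, item~(\ref{selector}) applied now to the normal pair $(x,y')$ yields that $x$ and $y'$ are independent, as claimed.
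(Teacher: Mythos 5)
Your proof is correct, but it takes a genuinely different route from the one in the paper. The paper argues by contraposition directly from the definition of independence via compressors: if $x$ were compressible given $y'$, it composes the compressor with the selector producing $y'$ (recovering $y'$ from $y$ on the fly) to compress $x$ given $y$; if $y'$ were compressible given $x$, it uses the selector as a splitter to separate $y$ into $y'$ and the discarded part $y''$, compresses $y'$, and then merges the compressed output with $y''$ in marked blocks of length $m$ to obtain a compressor for $y$ given $x$. You instead route everything through item~(\ref{selector}) of Theorem~\ref{thm:charac}, precomposing an arbitrary oblivious selector $\mathcal{S}$ with the oblivious selection $\mathcal{T}$ in each argument position and checking that the product automata $\mathcal{S}'$ and $\mathcal{S}''$ are again oblivious selectors. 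Your approach buys a cleaner, purely automaton-theoretic closure argument and sidesteps the delicate point hidden in the paper's second case, namely that a compression gain on $y'$ translates into a gain on $y$ only because the selected positions have positive density in $y$ (the paper leaves this implicit, as it does the bookkeeping for the block merging). The price is that you rely on the full equivalence $(1)\Leftrightarrow(\ref{selector})$, including the implication $(1)\Rightarrow(\ref{selector})$ whose proof the paper only gives by reference, whereas the paper's argument needs nothing beyond the definition of independence. Both arguments are valid at the level of rigor of the paper; do make sure to state explicitly, as you do in passing, that $\mathcal{T}$ is a one-input (Agafonov-type) oblivious selector so that its lazy simulation inside $\mathcal{S}'$ and $\mathcal{S}''$ is well defined, and that $y'$ is infinite so the composite runs are accepting.
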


\begin{proof}
  We show that if $x$ and~$y'$ are not finite-state independent, then $x$ and~$y$ are
  also not finite-state independent.  We suppose that $x$ and~$y'$ are not finite-state independent.
  This means either that $x$ can be compressed with the help of~$y'$ or
  that $y'$ can be compressed with the help of~$x$.  Suppose first that $x$
  can be compressed by a compressor~$\mathcal{C}$ with the help of~$y'$.
  Combining this compressor with the selector~$\mathcal{S}$ which selects
  $y'$ from~$y$ yields a compressor~$\mathcal{C}'$ which compresses $x$
  with the help~$y$.  Indeed, this compressor~$\mathcal{C}'$ skips symbols
  from~$y$ which are not selected by~$\mathcal{S}$ and
  simulates~$\mathcal{C}$ on those symbols which are selected
  by~$\mathcal{S}$.

  Suppose second that $y'$ can be compressed by a compressor~$\mathcal{C}$
  with the help of~$x$.  We claim that $y$ can also be compressed with the
  help of~$x$.  The selector~$\mathcal{S}$ which selects $y'$ from~$y$ is
  used as a splitter to split $y$ into $y'$ made of the selected symbols
  and $y''$ made of the non selected symbols.  Then, the
  compressor~$\mathcal{C}$ is used to compress $y'$ with the help of~$x$
  into a word~$z$.  Finally, words $z$ and~$y''$ are merged into a
  word~$z'$ by blocks of the same length~$m$.  Each block of length~$m$
  contains either $m$ symbols from~$z$ or $m$ symbols from~$y''$ plus an
  extra symbol indicating whether the block contains symbols from~$z$ or
  symbols from~$y''$.  The combination of all these automata yields a
  compressor which compresses $y$ with the help of~$x$.
\end{proof}

\subsection{From independence/fairness to shuffling and back}

%
%

\begin{proof}[Proof of Theorem~\ref{thm:charac}, (2) implies (4)]
  Suppose $x$ and $y$ are normal.
  Let $\gamma$ be the run of the shuffler~$\mathcal{S}$ with inputs $x$ and~$y$
  and let $\ell$ be a given length.  For each state~$q$ of~$\mathcal{S}$ and
  each word~$w$ of length~$\ell$, there exists by Lemma~\ref{lem:shuffler} a
  unique run~$\rho_{q,w}$ starting at state~$q$ and outputting~$w$.  

  For each word~$w$ of length~$\ell$, the number of occurrences of~$w$ in the
  prefix $z[1..n]$ of~$z$ is given by
  \begin{displaymath}
    \occ{z[1..n]}{w} = \sum_{q\in Q} \occ{\gamma[1..n]}{\rho_{q,w}}.
  \end{displaymath}
  By Lemma~\ref{lem:fairedgesscc}, the run~$\gamma$ reaches a recurrent
  strongly connected component.  Thus, it can be assumed without loss of
  generality that $\mathcal{S}$ is strongly connected.  By Lemmas
  \ref{lemma:1} and~\ref{lemma:2}, for any two finite runs $\rho$
  and~$\rho'$ of the same length and starting from the
  same state, the following equality holds.
  \begin{displaymath}
    \lim_{n\to\infty}{\frac{\occ{\gamma[1..n]}{\rho}}{n}}      = 
    \lim_{n\to\infty}{\frac{\occ{\gamma[1..n]}{\rho'}}{n}}.
  \end{displaymath}
  The result follows directly from this equality.
\end{proof}

\begin{proof}[Proof of Theorem~\ref{thm:charac}, (4) implies (1)]
Suppose that $x$ and~$y$ are not finite-state independent and
$x$ is compressible with the help of~$y$.  Let $\mathcal{A}$ be the
compressor such that $\rho_{\mathcal{A}}(x/y) < \rho(x)$.  Consider the
shuffler $\mathcal{S}$ that mimics $\mathcal{A}$ and copies each digit
of~$x$ (respectively of $y$) as soon as it is read by ${\mathcal A}$.  
We claim that $\mathcal{S}(x,y)$ is compressible, hence not normal. 
For compressing $\mathcal{S}(x,y)$, first define a splitter  $\mathcal{S}'$ 
exchanging the inputs and outputs in the transition of $\mathcal{S}$.
Thus, $\mathcal{S}'(\mathcal{S}(x,y))=(x,y)$.
By composing  $\mathcal{S}' $ with ${\mathcal A}$
we can   compress $x$  using~$y$ and obtain a compressed word  $x'$.
Let $m$ be the block size used in this compression.
Finally, words $y$ and~$x'$ are merged into a  word~$z$ 
interleaving  a block of $m$ symbols from $x$ with a block of $m$ symbols from $y$.
Since the hypothesis ensures $x$ is compressible, so is word $z$.
From this word $z$ we can  recover $(x',y)$, 
from which we can recover  $(x,y)$ and  then obtain $S(x,y)$, as required.
\end{proof}

\section{An algorithm for a pair of independent normal words}
\label{sec:algorithm}

To prove Theorem~\ref{thm:algorithm} we give an explicit algorithm based on
the characterization of finite-state independent normal words in terms of
shufflers (Theorem~\ref{thm:charac} statement~(4)).  The algorithm we
present here is an adaptation of Turing's algorithm for computing an
absolutely normal number~\cite{turing,BFP2007}.  But instead of computing
the expansion of a number that is normal in every integer base here we
compute a pair of normal infinite words such that every shuffling of them
produced by a finite-state shuffler is normal.  We start with auxiliary
definitions and some properties.  We write $\log$ for the logarithm in base
$e$ and $\log_b$ for any other base $b$.

\begin{definition} \label{def:sets}
1. \ For a shuffler ${\mathcal S}$, a real number $\varepsilon > 0$, a
finite word $\gamma \in A^*$ and a positive integer $n$, 
we define the set 
\begin{align*}
E_{{\mathcal S}}(\varepsilon, \gamma, n)&=\left\{(x,y) \in A^\omega \times A^\omega
:\left| \occ{{\mathcal S}(x,y)[1..n]}{\gamma} - {n}/{|A|^{|\gamma|}} \right| < \varepsilon n\right\}.
\end{align*}
2.\ Assume an enumeration of shufflers ${\mathcal S}_1, {\mathcal S}_2, \ldots$ and 
define the set
\begin{align*}
F(\varepsilon, t, \ell, n) &= 
\bigcap_{i=1}^t \bigcap_{r=1}^\ell \bigcap_{\gamma \in A^r} E_{{\mathcal S}_i}(\varepsilon, \gamma, n).
\end{align*}
3.\ For  each positive integer $n$, let
    $\ell_n = ({\log_{|A|} n})/{3}$, 
    $t_n = n$ and 
    $\varepsilon_n = 2 \;\;\sqrt{(\log n \log_{|A|} n)/n}$. 
\[
F_n = F(\varepsilon_n, t_n, \ell_n, n). 
\]
\end{definition}

\begin{lemma}[Lemma~8 in~\cite{turing}, adapted from Theorem 148 in~\cite{HardyWright2008}] \label{lemma:hardy}
Let $r$ and $n$ be positive integers. 
For every real $\varepsilon$ such that ${6}/{\lfloor n/r \rfloor} \leq \varepsilon \leq {1}/{|A|^ r}$ 
and for every $\gamma \in A^r$, 
if  $N(\gamma, i, n) = |\{w \in A^n : \occ{w}{\gamma} = i \}|$ then
\[
\sum_{0\leq i\leq  n/|A|^r - \varepsilon n} N(\gamma,i,n)
+
\sum_{ n/|A|^r + \varepsilon n\leq i\leq n} N(\gamma,i,n)
<  \ 2 |A|^{n+2r-2} r e^{-{|A|^r \varepsilon^2 n}/{6r}}.
\]
\end{lemma}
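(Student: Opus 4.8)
The plan is to prove this as a purely combinatorial large-deviation estimate for block occurrences in a uniformly random word, following the scheme of Hardy--Wright's Theorem~148 and Turing's Lemma~8. The left-hand side is exactly $|A|^n$ times the probability that, for $w$ chosen uniformly in $A^n$, the occurrence count $\occ{w}{\gamma}$ deviates from its mean (close to $n/|A|^r$) by at least $\varepsilon n$. The central obstacle is that occurrences of $\gamma$ at nearby positions overlap, so the indicator events ``$\gamma$ occurs at position $i$'' are not independent and a naive Chernoff bound does not apply directly.

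First I would restore independence by decomposing the set of candidate starting positions $\{1,\ldots,n-r+1\}$ according to residue modulo~$r$. Within a single residue class the starting positions differ by multiples of~$r$, so the corresponding length-$r$ blocks are pairwise disjoint; hence for uniform $w$ the events ``$\gamma$ occurs at position $i$'' for $i$ in one class are mutually independent, each of probability $p = |A|^{-r}$. Writing $X_j$ for the number of occurrences in class~$j$, each $X_j$ is a Binomial$(m_j,p)$ variable with $m_j \approx \lfloor n/r\rfloor$, and $\occ{w}{\gamma} = \sum_{j=1}^r X_j$. The exact number of words with $X_j = i$ equals $\binom{m_j}{i}(|A|^r-1)^{m_j-i}|A|^{n-rm_j}$, which makes the probabilistic bookkeeping rigorous and normalizes to the binomial probability after dividing by $|A|^n$.

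Next comes a pigeonhole step: if $\occ{w}{\gamma}$ deviates from $n/|A|^r$ by at least $\varepsilon n$, then at least one $X_j$ deviates from its own mean $m_j p$ by at least roughly $\varepsilon n/r \approx \varepsilon m_j$. Here the hypothesis $\varepsilon \ge 6/\lfloor n/r\rfloor$ is exactly what absorbs the gap between $n/|A|^r$ and the true mean $(n-r+1)/|A|^r$ together with the rounding of $m_j$. I would then apply a binomial tail (Chernoff/Bernstein) estimate to each class: since the hypothesis $\varepsilon \le 1/|A|^r = p$ places us in the small-deviation regime where the absolute deviation $\varepsilon m_j$ is at most the expected count $p m_j$, the bound gives a per-class tail probability at most $2\exp\!\bigl(-|A|^r\varepsilon^2 m_j/6\bigr)$, and with $m_j \approx n/r$ the exponent becomes $|A|^r\varepsilon^2 n/(6r)$, matching the statement.

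Finally, a union bound over the $r$ residue classes supplies the factor~$r$, multiplication by $|A|^n$ turns probabilities back into word counts, and the boundary corrections --- the two free end-segments of up to $r-1$ symbols each that lie outside the full blocks of a class --- produce the polynomial prefactor $|A|^{2(r-1)} = |A|^{2r-2}$. I expect the genuinely delicate part to be the constant tracking rather than the structure: landing the exponent on exactly $|A|^r\varepsilon^2 n/(6r)$ requires the precise binomial inequality of Hardy--Wright (Theorem~148) rather than a generic Chernoff bound (which would give $3r$), and one must check that the two-sided hypothesis $6/\lfloor n/r\rfloor \le \varepsilon \le 1/|A|^r$ is simultaneously the regime condition for the tail bound and the slack needed to absorb the mean-shift and rounding. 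The independence-restoring residue decomposition is the conceptual key, but it is standard; the arithmetic of matching the stated constants is where the real care lies.
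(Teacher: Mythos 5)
The paper offers no proof of this lemma at all: it is imported as Lemma~8 of Turing's note, itself an adaptation of Theorem~148 of Hardy and Wright, so there is no internal argument to compare yours against. Your sketch reconstructs the standard proof behind that citation and is structurally sound: splitting the starting positions into the $r$ residue classes modulo~$r$ does make the occurrence indicators within one class independent Bernoulli$(|A|^{-r})$ variables supported on disjoint blocks; the pigeonhole step forces a deviation of order $\varepsilon n/r$ in some class; and a two-sided binomial tail bound followed by a union bound over the classes produces the factors $2$ and $r$ and an exponent dominating $|A|^{r}\varepsilon^{2}n/(6r)$. The hypothesis $6/\lfloor n/r\rfloor\le\varepsilon$ is indeed exactly the slack that absorbs the mean shift $(r-1)|A|^{-r}$ and the rounding of the class sizes, while $\varepsilon\le|A|^{-r}$ keeps the relative deviation $\delta=\varepsilon|A|^{r}$ at most $1$, which is the regime where the quadratic-exponent tail estimate is valid. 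Two corrections to your closing commentary, neither of which affects validity. First, a generic Chernoff bound gives a per-class exponent with denominator $3$ rather than $6$, i.e.\ a \emph{stronger} estimate, so the ``precise binomial inequality'' is not needed; one only has to check that after the pigeonhole loss and the replacement of the class size $m_j$ by $n/r$ the exponent still exceeds $|A|^{r}\varepsilon^{2}n/(6r)$, which it does because the two-sided hypothesis forces $\lfloor n/r\rfloor\ge 12$. Second, the prefactor $|A|^{2r-2}$ does not come from the free end-segments: the number of words with a prescribed class profile is \emph{exactly} $|A|^{n}$ times the corresponding binomial probability (the factor $|A|^{n-rm_j}$ for the free symbols cancels precisely), so $|A|^{2r-2}$ is pure slack inherited from the cruder estimates in the original Hardy--Wright computation. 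Your bookkeeping would therefore prove a slightly stronger inequality than the one stated, which is harmless.
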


For a word $u \in A^*$ we denote by  $[u]$ the set of infinite words 
that start with $u$, and we call it the  cylinder determined by $u$, 
\[
[u] = \{x \in A^\omega : x[1..|u|] = u\}.
\]
For the Cartesian product of two cylinders  $[u] \times [v]$ 
we write   $([u], [v])$, and we call  the pair of cylinders
 determined by  $(u,v)$.

\begin{proposition} \label{prop:boundE} 
For every shuffler ${\mathcal S}$, every  $n, r,\varepsilon$
such that ${6}/{\lfloor n/r \rfloor} \leq \varepsilon \leq {1}/{|A|^ r}$
and every $\gamma\in A^r$,
\[
\mu (E_{\mathcal S}(\varepsilon, \gamma, n)) > 
1 - 2 |A|^{2r-2} r e^{-{|A|^r \varepsilon^2 n}/{6r}}.
\]
\end{proposition}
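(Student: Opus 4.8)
The plan is to reduce the measure estimate to the purely combinatorial bound of Lemma~\ref{lemma:hardy} by showing that a shuffler pushes the uniform product measure $\mu$ on $A^\omega\times A^\omega$ forward to the uniform measure on length-$n$ output words. The decisive ingredient is the unique-run property of Lemma~\ref{lem:shuffler}, which is exactly where the common-alphabet hypothesis is used.

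First I would fix $w\in A^n$ and invoke Lemma~\ref{lem:shuffler}: there is exactly one run of length $n$ starting at the initial state $q_0$ of $\mathcal{S}$ and outputting $w$. Reading this run off, the positions at which it takes a transition of type~I (consuming a symbol of $x$) and of type~II (consuming a symbol of $y$) are determined, as are the symbols consumed; hence the run reads a prefix $u=x[1..j]$ of the first tape and a prefix $v=y[1..n-j]$ of the second tape, where $j$, $u$ and $v$ depend only on $w$. Because the shuffler is $2$-deterministic and scans its tapes left to right, I would argue that the event $\{(x,y):\mathcal{S}(x,y)[1..n]=w\}$ coincides with the pair of cylinders $([u],[v])$, whose measure is $|A|^{-j}\cdot|A|^{-(n-j)}=|A|^{-n}$.

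Since distinct output words $w$ give disjoint events, and since every pair $(x,y)$ yields an output prefix of length $n$ (the run never blocks, as Lemma~\ref{lem:shuffler} applied to single symbols furnishes a transition for each input symbol at each state), these cylinders partition $A^\omega\times A^\omega$, and the pushforward of $\mu$ under $(x,y)\mapsto\mathcal{S}(x,y)[1..n]$ is uniform on $A^n$. Writing the complement of $E_{\mathcal{S}}(\varepsilon,\gamma,n)$ as the union of the cylinders attached to the bad words, I then obtain
\[
\mu\bigl(E_{\mathcal{S}}(\varepsilon,\gamma,n)^{c}\bigr)
 = \frac{\bigl|\{\,w\in A^n : |\,\occ{w}{\gamma}-n/|A|^{r}\,|\ge \varepsilon n\,\}\bigr|}{|A|^{n}}.
\]

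Finally I would bound the numerator by Lemma~\ref{lemma:hardy}, whose hypotheses $6/\lfloor n/r\rfloor\le\varepsilon\le 1/|A|^r$ match those assumed here and whose two sums count precisely the words with $\occ{w}{\gamma}\le n/|A|^r-\varepsilon n$ and with $\occ{w}{\gamma}\ge n/|A|^r+\varepsilon n$. Substituting the bound $2|A|^{n+2r-2}r\,e^{-|A|^r\varepsilon^2 n/6r}$ and dividing by $|A|^n$ gives $\mu(E_{\mathcal{S}}(\varepsilon,\gamma,n)^{c})<2|A|^{2r-2}r\,e^{-|A|^r\varepsilon^2 n/6r}$, and the proposition follows by passing to complements. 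The step I expect to be the crux is the measure-preservation claim of the second and third paragraphs: it rests entirely on the existence-and-uniqueness of runs in Lemma~\ref{lem:shuffler}, equivalently on the corresponding splitter being $1$-deterministic and recovering $(x,y)$ from $\mathcal{S}(x,y)$, a feature that fails when the two tapes carry different alphabets.
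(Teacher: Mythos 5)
Your proof is correct and follows essentially the same route as the paper: decompose the event by the output word $w$, observe that each $w$ pulls back to a pair of cylinders of total length $n$ and hence measure $|A|^{-n}$, and then apply Lemma~\ref{lemma:hardy} to count the bad words. Your explicit appeal to Lemma~\ref{lem:shuffler} to justify that each preimage $\mathcal{S}^{-1}([w])$ is a single pair of cylinders is exactly the step the paper leaves implicit when it writes $\mu(\mathcal{S}^{-1}([w])) = |A|^{-n}$.
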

\begin{proof}
Consider the set 
\begin{align*}
P(\varepsilon, \gamma, n)&=\left\{w\in A^n: 
\left|\occ{w}{\gamma} - {n}/{|A|^{|\gamma|}}\right| < \varepsilon n\right\}.
\\\omit\hspace*{-.9cm}\text{Then, }
\\
E_{\mathcal S}(\varepsilon, \gamma, n) 
& =\bigcup_{w \in P(\varepsilon, \gamma, n)}  
\{([u],[v])  : |u| + |v| = n \text{ and }\forall x\in [u]\forall y\in [v],\  {\mathcal S}(x,y) \in [w]\}
\\
&= \bigcup_{w \in P(\varepsilon, \gamma, n)}  {\mathcal S}^{-1}([w] ).
\\\omit\hspace*{-.9cm}\text{Thus, }
\\
\mu(E_{{\mathcal S}}(\varepsilon, \gamma, n))& = 
\sum_{w \in P(\varepsilon, \gamma, n)} \mu({\mathcal S}^{-1}([w])) = 
|P(\varepsilon, \gamma, n)| \  |A|^{-n}.
\end{align*}
Finally, Lemma~\ref{lemma:hardy} gives the needed upper bound for 
$|\overline{P}(\varepsilon, \gamma, n)|$.
\end{proof}

For any set $B\subseteq A^\omega\times A ^\omega$ we write $\overline{B}$ to denote its complement,
$(A^\omega\times A ^\omega)\setminus B$.

\begin{proposition} \label{prop:boundA}
For any $\varepsilon$, $t$, $\ell$ and $n$, such that ${6}/{\lfloor n/\ell \rfloor}
 \leq \varepsilon \leq {1}/{|A|^\ell}$,
\[
\mu(F(\varepsilon, t, \ell, n)) > 1 - 2 t |A|^{3\ell-1} e^{-{\varepsilon^2 n}/(3\ell)}.
\]
\end{proposition}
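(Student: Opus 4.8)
The plan is to bound the measure of the complement $\overline{F(\varepsilon, t, \ell, n)}$ using a union bound, then apply Proposition~\ref{prop:boundE} to each individual event. Recall from Definition~\ref{def:sets} that
\[
F(\varepsilon, t, \ell, n) = \bigcap_{i=1}^t \bigcap_{r=1}^\ell \bigcap_{\gamma \in A^r} E_{{\mathcal S}_i}(\varepsilon, \gamma, n),
\]
so by De Morgan the complement is a union over all triples $(i, r, \gamma)$ with $1 \le i \le t$, $1 \le r \le \ell$, and $\gamma \in A^r$ of the complements $\overline{E_{{\mathcal S}_i}(\varepsilon, \gamma, n)}$. The subadditivity of measure gives
\[
\mu(\overline{F(\varepsilon, t, \ell, n)}) \le \sum_{i=1}^t \sum_{r=1}^\ell \sum_{\gamma \in A^r} \mu(\overline{E_{{\mathcal S}_i}(\varepsilon, \gamma, n)}).
\]

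First I would apply Proposition~\ref{prop:boundE} to each summand. That proposition gives $\mu(\overline{E_{\mathcal S}(\varepsilon, \gamma, n)}) < 2|A|^{2r-2} r\, e^{-|A|^r \varepsilon^2 n/(6r)}$ whenever $\gamma \in A^r$ and the hypothesis ${6}/{\lfloor n/r \rfloor} \le \varepsilon \le {1}/{|A|^r}$ holds. The stated hypothesis on $F$ controls the parameters only at $r = \ell$, namely ${6}/{\lfloor n/\ell \rfloor} \le \varepsilon \le {1}/{|A|^\ell}$; I would check that this suffices for all $r \le \ell$. The upper bound $\varepsilon \le 1/|A|^\ell \le 1/|A|^r$ transfers down to smaller $r$, and the lower bound $\varepsilon \ge 6/\lfloor n/\ell \rfloor \ge 6/\lfloor n/r \rfloor$ also holds since $\lfloor n/r \rfloor \ge \lfloor n/\ell \rfloor$ for $r \le \ell$. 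So Proposition~\ref{prop:boundE} applies to every term.

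Next I would bound the exponential factor uniformly. Since $|A|^r \ge 1$ and $\varepsilon \le 1/|A|^\ell$ forces $r \le \ell$, I would use $e^{-|A|^r \varepsilon^2 n/(6r)} \le e^{-\varepsilon^2 n/(6\ell)}$; here I bound $|A|^r \ge 1$ in the numerator and $r \le \ell$ in the denominator to get the worst case. The prefactor is handled by summing over $\gamma$: for fixed $r$ there are $|A|^r$ words, and $|A|^r \cdot 2|A|^{2r-2} r = 2r\,|A|^{3r-2}$. Summing over $r$ from $1$ to $\ell$ and crudely bounding each term by its value at $r = \ell$ (so $2\ell\,|A|^{3\ell-2}$ per value of $r$, giving a factor $\ell$, or more simply dominating the geometric-type sum), together with the outer sum over the $t$ shufflers, should produce a bound of the shape $2t\,|A|^{3\ell-1} e^{-\varepsilon^2 n/(3\ell)}$.

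The main obstacle will be getting the constants in the exponent and the polynomial prefactor to land exactly on the claimed form. The target exponent is $\varepsilon^2 n/(3\ell)$, which is a \emph{weaker} (larger, hence safer) decay than the per-term $\varepsilon^2 n/(6\ell)$; the factor-of-two slack is presumably absorbed to cover the summation over $r$ and to round the prefactor up to the clean expression $|A|^{3\ell-1}$. I would verify that $\sum_{r=1}^\ell 2r\,|A|^{3r-2}$ is dominated by $|A|^{3\ell-1}$ after using part of the exponential slack (replacing the sharper decay by the coarser one frees an $e^{-\varepsilon^2 n/(6\ell)}$ factor that kills the polynomial-in-$\ell$ and geometric-in-$r$ overhead for $n$ large relative to $\ell$). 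Threading this bookkeeping so that the final inequality is genuinely strict and matches the stated constants is the only delicate point; the structural argument (De Morgan, union bound, term-wise Proposition~\ref{prop:boundE}) is routine.
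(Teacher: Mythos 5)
Your skeleton is the same as the paper's: De Morgan plus subadditivity, term-wise application of Proposition~\ref{prop:boundE}, then a count of the terms. Your explicit check that the hypothesis ${6}/{\lfloor n/\ell \rfloor} \leq \varepsilon \leq {1}/{|A|^{\ell}}$ at $r=\ell$ implies the corresponding hypothesis at every $r\leq\ell$ is correct and is a point the paper leaves implicit.

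However, your treatment of the exponent contains a genuine error of direction. You discard the factor $|A|^{r}$ by bounding $|A|^{r}\geq 1$, obtaining the per-term decay $e^{-\varepsilon^{2}n/(6\ell)}$, and then assert that the target exponent $\varepsilon^{2}n/(3\ell)$ is ``weaker, hence safer.'' It is the opposite: since $1/(3\ell)>1/(6\ell)$, one has $e^{-\varepsilon^{2}n/(3\ell)}<e^{-\varepsilon^{2}n/(6\ell)}$, so the bound you would prove exceeds the target by the factor $e^{\varepsilon^{2}n/(6\ell)}$, which is unbounded; there is no ``freed'' exponential slack to absorb the prefactors, and the qualifier ``for $n$ large relative to $\ell$'' is not available since the proposition must hold for all admissible parameters. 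The missing idea is precisely the factor you threw away: since $|A|\geq 2$ and $1\leq r\leq\ell$, one has $|A|^{r}\geq 2\geq 2r/\ell$, hence $|A|^{r}\varepsilon^{2}n/(6r)\geq \varepsilon^{2}n/(3\ell)$, which delivers the exponent $\varepsilon^{2}n/(3\ell)$ term by term with no residue to absorb. With that fix, the count $t\sum_{r=1}^{\ell}|A|^{r}<t|A|^{\ell+1}$ times the per-term prefactor $2r|A|^{2r-2}\leq 2\ell|A|^{2\ell-2}$ gives $2t\ell|A|^{3\ell-1}e^{-\varepsilon^{2}n/(3\ell)}$; note that even the paper's own computation silently drops this stray factor of $\ell$, so your instinct that the prefactor bookkeeping is the delicate point is right---but it cannot be repaired by spending exponential decay, only by tightening (or tolerating) the polynomial count.
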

\begin{proof}By Definition \ref{def:sets},
\[
\mu (\overline{F}(\varepsilon, t, \ell, n)) \leq 
\sum_{i = 1}^t \sum_{r = 1}^{\ell} \sum_{\gamma \in A^r} 
\mu(\overline{E_{{\mathcal S}_i}}(\varepsilon, \gamma, n)).
\]
The number of terms of this triple sum is bounded by
\[
\sum_{i = 1}^t \sum_{r = 1}^{\ell} \sum_{\gamma \in A^r} 1 =  
\sum_{i = 1}^t \sum_{r = 1}^{\ell} |A|^r < 
\sum_{i=1}^t\frac{|A|^{\ell+1} - 1}{|A|-1} < 
\sum_{i=1}^t |A|^{\ell+1} =
t |A|^{\ell+1}.
\]
From the lower bound given in  Proposition~\ref{prop:boundE}
we obtain that for every shuffler ${\mathcal S}$ and  for every  
word $\gamma\in A^{\leq \ell}$,
\[
\mu(\overline{E_{\mathcal S}}(\varepsilon, \gamma, n)) < 
2 |A|^{2\ell-2} \ell e^{-{\varepsilon^2 n}/(3\ell)}.
\]
Therefore,
\[
\mu(\overline{F}(\varepsilon, t, \ell, n)) < 
2 t |A|^{3\ell-1} e^{-{\varepsilon^2 n}/(3\ell)}.\qedhere
\]
\end{proof}

Recall the values given in~Definition~\ref{def:sets} 
    $\ell_n = ({\log_{|A|} n})/{3}$, 
    $t_n = n$,
    $\varepsilon_n = 2 \sqrt{(\log n \log_{|A|} n)/n}$ 
and $F_n = F(\varepsilon_n, t_n, \ell_n, n)$.

\begin{proposition}\label{prop:boundquad}
Let $n_{\text{\em start}}=\min\{ n :  \varepsilon_n \geq 6 / \lfloor n/\ell_n \rfloor\}$.
Then for every $n\geq n_{\text{\em start}}$,  $\ell_n, t_n \geq 1$, 
\[
\mu(F_n) \geq 1 - {1}/{n^2}.
\]
\end{proposition}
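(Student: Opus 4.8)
The plan is to instantiate Proposition~\ref{prop:boundA} at the calibrated parameters $\varepsilon_n$, $t_n$, $\ell_n$, $n$ and then verify that the error term it produces is at most $1/n^2$. The argument is a check of hypotheses followed by a direct computation, and the whole point is that the three quantities $\ell_n$, $t_n$, $\varepsilon_n$ are tuned so that an exponential factor exactly cancels a polynomial one.

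First I would confirm the two hypotheses needed to apply Proposition~\ref{prop:boundA}. The lower bound $6/\lfloor n/\ell_n \rfloor \le \varepsilon_n$ holds for every $n \ge n_{\text{start}}$, by the very definition of $n_{\text{start}}$. For the upper bound $\varepsilon_n \le 1/|A|^{\ell_n}$, I would rewrite $1/|A|^{\ell_n} = |A|^{-(\log_{|A|} n)/3} = n^{-1/3}$, so that the requirement becomes $4 \log n \log_{|A|} n \le n^{1/3}$; since the left-hand side is polylogarithmic while the right-hand side is polynomial, this holds for all sufficiently large $n$, and so it is satisfied throughout the stated range once $n_{\text{start}}$ is taken large enough to meet it.

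With both hypotheses in force, Proposition~\ref{prop:boundA} gives
\[
\mu(F_n) > 1 - 2 t_n\, |A|^{3\ell_n - 1}\, e^{-\varepsilon_n^2 n/(3\ell_n)}.
\]
I would then substitute the defining values. From $3\ell_n = \log_{|A|} n$ one gets $|A|^{3\ell_n - 1} = |A|^{\log_{|A|} n}/|A| = n/|A|$, and $t_n = n$. For the exponent, $\varepsilon_n^2 n = 4 \log n \log_{|A|} n$, so dividing by $3\ell_n = \log_{|A|} n$ leaves exactly $4\log n$, whence $e^{-\varepsilon_n^2 n/(3\ell_n)} = n^{-4}$. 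Multiplying the three factors, the error term collapses to
\[
2 \cdot n \cdot \frac{n}{|A|} \cdot n^{-4} = \frac{2}{|A|\, n^2}.
\]
Since the alphabet has at least two symbols, $|A| \ge 2$, and hence $2/(|A|\,n^2) \le 1/n^2$, giving $\mu(F_n) > 1 - 1/n^2$, as required.

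The step I expect to carry the weight is the exact cancellation in the exponent. The definition $\varepsilon_n = 2\sqrt{(\log n \log_{|A|} n)/n}$ is engineered so that $\varepsilon_n^2 n/(3\ell_n)$ reduces to $4\log n$, which is precisely what makes the exponential factor $n^{-4}$ outrun the polynomial prefactor $t_n |A|^{3\ell_n - 1} = n^2/|A|$ and leave a spare power of $n^{-2}$. Everything else---checking the two inequalities on $\varepsilon_n$ and bounding $2/(|A|\,n^2)$ by $1/n^2$---is routine; the substance lies entirely in this calibration of $(\ell_n, t_n, \varepsilon_n)$ against the form of the bound in Proposition~\ref{prop:boundA}.
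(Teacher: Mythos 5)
Your proposal is correct and follows essentially the same route as the paper: verify the hypothesis of Proposition~\ref{prop:boundA} via the definition of $n_{\text{start}}$, substitute the calibrated values of $t_n$, $\ell_n$, $\varepsilon_n$, and observe that the exponent collapses to $4\log n$ so the error term becomes $2n^2 e^{-4\log n}/|A| \le 1/n^2$. If anything, you are slightly more careful than the paper in noting that the upper-bound condition $\varepsilon_n \le 1/|A|^{\ell_n}$ also needs $n$ large enough, a point the paper passes over silently.
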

\begin{proof}
To apply Proposition~\ref{prop:boundA} it is required that 
${6}/{\lfloor n/\ell_n \rfloor} \leq \varepsilon_n \leq {1}/{|A|^{\ell_n}}$.
Then, for every $n\geq n_{\text{\em start}}$ the required inequality holds. So,  
application of Proposition~\ref{prop:boundA} yields
\begin{align*}
\mu(\overline{F_n}) & \leq  2 t_n |A|^{3\ell_n-1} e^{-{\varepsilon^2 n}/(3\ell_n)}
\\
& \leq  t_n\ |A|^{3\ell_n} e^{-{\varepsilon^2 n}/(3\ell_n)} 
\\
& =  n |A|^{( \log_{|A|} n)}  e^{- 4n  (\log n )   (\log_{|A|} n)/ (n \log_{|A|} n)}  
\\
&=  n^2\       e^{-4 \log n}
\\
& = \frac{1}{n^2}.\qedhere
\end{align*}
\end{proof}

If  $n_{start}$ is as determined by Proposition~\ref{prop:boundquad},
then  $\bigcap_{n\geq n_{start}} F_n$  is not empty and
consists just of  pairs of finite-state  independent normal words.
We can actually show that the intersection of  a subsequence of $F_n$'s 
with $n$ growing at most exponentially, 
also consists just  of  pairs of finite-state  independent normal words.
The next definition  fixes $n_0$ as  $\log n_{\text{\em start}}$ and defines the 
sets~$G_n$ which are used in the proof of Theorem~\ref{thm:algorithm}.

\begin{definition}\label{def:G}
Let $n_0= \log_{|A|}\min\{ n :  \varepsilon_n \geq 6 / \lfloor n/\ell_n \rfloor\}$.
We define a sequence $(G_n)_{n \geq 0}$  of finite sets of 
pairs of cylinders in $A^\omega\times A^\omega$,
such that for every $n$, $G_{n+1} \subseteq G_n$ as 
\[
G_n = \bigcap_{j = 0}^{n} F_{|A|^{n_0 + j}}
\]
\end{definition}

\begin{lemma}\label{lemma:G}
The set $\bigcap_{n\geq 0} G_{n}$ 
consists exclusively of pairs of finite-state independent normal  words.
\end{lemma}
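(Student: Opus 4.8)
The plan is to invoke the equivalence $(1)\Leftrightarrow(4)$ of the Characterization Theorem (Theorem~\ref{thm:charac}). Since that equivalence presumes the two words are normal, it suffices to prove that every pair $(x,y)\in\bigcap_{n\ge0}G_n$ satisfies (a) $x$ and $y$ are both normal, and (b) for every shuffler $\mathcal{S}$ the word $\mathcal{S}(x,y)$ is normal. Both reduce to a single statement: for every shuffler $\mathcal{S}$, the output $z=\mathcal{S}(x,y)$ is normal. Indeed, (a) follows from (b) by taking $\mathcal{S}$ to be the join shuffler $z=x_1y_1x_2y_2\cdots$, which is one of the enumerated $\mathcal{S}_i$: the frequency of a length-$2r$ block of $z$ aligned at odd positions is the joint frequency of a pair of length-$r$ factors $(x_i\cdots x_{i+r-1},\,y_i\cdots y_{i+r-1})$, and summing out the $y$-coordinate recovers the length-$r$ block frequency of $x$ (symmetrically for $y$). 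So I would concentrate on showing that $z=\mathcal{S}_i(x,y)$ is simply normal to every length $r$.

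Write $n_j:=|A|^{n_0+j}$, so that $\bigcap_{n}G_n=\bigcap_{j\ge0}F_{n_j}$ by Definition~\ref{def:G}. Fix a shuffler $\mathcal{S}=\mathcal{S}_i$, a length $r$ and a word $\gamma\in A^r$. Since $t_n=n$ and $\ell_n=(\log_{|A|}n)/3$ both tend to infinity, for any fixed bound $R\ge r$ and all $j$ large enough we have $i\le t_{n_j}$ and $R\le\ell_{n_j}$; membership of $(x,y)$ in $F_{n_j}$ then gives, for every block $w$ with $|w|\le R$, that $\bigl|\occ{z[1..n_j]}{w}-n_j/|A|^{|w|}\bigr|<\varepsilon_{n_j}n_j$ with $\varepsilon_{n_j}\to0$. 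Thus $z[1..n_j]$ looks uniform up to block length $R$, with relative error $\varepsilon_{n_j}|A|^R\to0$.

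The hard part, and the crux of the proof, is a localization argument bridging the geometric gap: we only control $z$ at the lengths $n_j$, whereas normality demands control at every $m$, and the naive interpolation between $n_j$ and $n_{j+1}=|A|\,n_j$ loses a multiplicative factor $|A|$ and is useless. The idea is to exploit the control of \emph{long} blocks. Fix a large auxiliary length $R$ and a small $\delta>0$. For $j$ large the word $z[1..n_{j+1}]$ has every length-$R$ block at frequency $(1\pm o(1))|A|^{-R}$; by a large-deviation count in the spirit of Lemma~\ref{lemma:hardy}, all but an exponentially small (in $R$) fraction $\beta(R)$ of its consecutive length-$R$ windows are themselves $\delta$-close to uniform on their length-$r$ subblocks. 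Now take $m$ with $n_j\le m<n_{j+1}$; the prefix $z[1..m]$ contains at least a $1/|A|$ fraction of the windows of $z[1..n_{j+1}]$, so its proportion of bad windows is at most $|A|\beta(R)$. Summing the length-$r$ block counts over the good windows, and bounding the contributions of bad windows and of occurrences straddling window boundaries by $|A|\beta(R)$ and $r/R$ respectively, yields $\occ{z[1..m]}{\gamma}/m=|A|^{-r}+O(\delta)+O(|A|\beta(R))+O(r/R)$. Letting first $j\to\infty$, then $R\to\infty$ and $\delta\to0$, gives $\lim_m \occ{z[1..m]}{\gamma}/m=|A|^{-r}$.

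Since $\gamma$ and $r$ are arbitrary, $z=\mathcal{S}_i(x,y)$ is normal; applied to the join shuffler this also makes $x$ and $y$ normal, and Theorem~\ref{thm:charac} then yields that $x$ and $y$ are independent. If one also needs the set to be inhabited, Proposition~\ref{prop:boundquad} gives $\mu(\overline{G_n})\le\sum_{j}\mu(\overline{F_{n_j}})\le\sum_j n_j^{-2}<1$, so $\mu(\bigcap_n G_n)>0$. I expect the main obstacle to be making the localization step fully rigorous: proving that correct frequencies of all blocks up to length $R$ in $z[1..n_{j+1}]$ force almost all of its length-$R$ windows to be individually near-normal. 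This requires the aligned-versus-plain occurrence bookkeeping together with a large-deviation estimate, whereas the interpolation and boundary bounds around it are routine.
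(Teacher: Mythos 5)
Your overall strategy is the right one --- show that every shuffle $\mathcal{S}(x,y)$ is normal and then invoke Theorem~\ref{thm:charac}(4) --- but the step you single out as the crux is exactly where your proposal diverges from the paper, and not to your advantage. You dismiss the ``naive interpolation'' between $n_j$ and $n_{j+1}=|A|\,n_j$ as useless because it loses a multiplicative factor $|A|$. In fact that interpolation, factor-of-$|A|$ loss and all, is the paper's entire argument: for $n_j\le m<n_{j+1}$ one bounds $\occ{z[1..m]}{w}\le\occ{z[1..n_{j+1}]}{w}<n_{j+1}(|A|^{-|w|}+\varepsilon)\le 2|A|\,m\,|A|^{-|w|}$, and then applies the Piatetski-Shapiro / hot-spot criterion (Theorem~4.6 in~\cite{Bugeaud12}, quoted in the paper's proof): a word is normal as soon as $\limsup_m \occ{z[1..m]}{w}/m\le C|A|^{-|w|}$ for a single constant $C$ uniform in $w$. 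The constant $C=2|A|$ is perfectly acceptable, so no exact limit and no localization are needed. Your replacement --- the window/large-deviation argument forcing almost all length-$R$ windows of $z[1..n_{j+1}]$ to be individually near-uniform --- is left as a sketch that you yourself flag as the main obstacle; as written it is a genuine gap, and an unnecessary one.

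Two smaller points. First, your derivation of normality of $x$ and $y$ from normality of the join marginalizes over aligned length-$2r$ blocks \emph{at odd positions}, but normality of $z$ directly controls occurrence counts over all positions, not over the odd ones; you would need Wall's theorem (normality is preserved along arithmetic progressions) or, as the paper does, Agafonov's theorem applied to the splitter that inverts $\mathcal{S}$ and discards one output tape. Second, the nonemptiness remark via $\sum_j n_j^{-2}<1$ is fine but not part of this lemma; the paper establishes inhabitation separately in the proof of Theorem~\ref{thm:algorithm}.
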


\begin{proof}
Fix $n_0$ as defined in Definition~\ref{def:G}.  
Suppose $(u,v)\in \bigcap_{n\geq 0} G_n$.
To show that $u$ and $v$ are finite-state independent we show that for any
shuffler ${\mathcal S}$,  ${\mathcal S}(u,v)$ is a normal sequence.
Fix a finite word  $w \in A^*$.
Pick $m_0$ such that if $i$ is the index of ${\mathcal S}$ in the enumeration of shufflers, 
$t_{m_0} \geq i$, $\ell_{m_0} \geq |w|$, $m_0 \geq n_0$ and  $\varepsilon_{m_0} < 1 / |A|^{|w|}$. 

Let's see that for any   $m$ greater than $m_0$ 
the following holds.
Let $k$ be such that  $|A|^k \leq m < |A|^{k+1}$. Then,
using that $(u,v)\in F_{|A|^{k+1}}$,
\begin{align*}
\frac{\occ{{\mathcal S}(u,v)[1..m]}{w}}{m} 
         &< \frac{\occ{{\mathcal S}(u,v)[1..|A|^{k+1}]}{w}}{m} \\
         & <  \frac{1}{m} |A|^{k+1} \left( \frac{1}{|A|^{|w|}} + \varepsilon_{m_0}\right) \\
        & \leq  \frac{|A|^{k+1}}{|A|^k} \frac{2}{|A|^{|w|}}\\
        & =  \frac{2|A|}{|A|^{|w|}}.
\end{align*}
This implies that 
\[
\limsup_{m\to\infty} \frac{\occ{{\mathcal S}_i(u,v)[1..m]}{w}}{m} < \frac{2|A|}{|A|^{|w|}}.
\]
We conclude that ${\mathcal S}(u,v)$ is normal 
applying Theorem~4.6 in~\cite{Bugeaud12} 
which establishes that 
a word $x$ is normal if, and only if, 
there exists a positive number $C$ such that  for every  finite word $w$,
\[
\limsup_{m \to \infty} \frac{\occ{x[1..m]}{w}}{m} \leq \frac{C}{|A|^{|w|}}.
\]
Hence,  taking $C$ equal to $2|A|$ we obtain that $S(u,v)$ is normal.
Now we  prove that  both, $u$ and $v$, are normal too.
Consider the selector ${\mathcal S}'$ defined as the 
 splitter     that reverses  ${\mathcal S}$  and then ignores the second output tape.
That is, if ${\mathcal S}(u,v)=z$ then ${\mathcal S}'(z) = u$. 
Since ${\mathcal S}(u,v)$ is normal, by Agafonov's theorem $u$ is normal.
A similar  argument proves  that $v$ is  also normal.
We proved that  every $(u,v)\in \bigcap_{n\geq 0} G_{n}$ is a pair of normal 
words satisfying statement~(4) of Theorem~\ref{thm:charac}.
Hence, $(u,v)$ is a pair of finite-state independent normal words.
\end{proof}

\setcounter{algocf}{\value{theorem}}

\begin{algorithm}

\DontPrintSemicolon
\SetKwFor{Repeat}{repeat}{}{forever}
\SetKwIF{If}{Elseif}{Else}{if}{then}{else if}{else}{\vspace*{-0.5cm}}

\SetKwBlock{Begin}{begin}{end}%
\SetKw{Print}{print}

\BlankLine

{\em Proof of Theorem~\ref{thm:algorithm} }\;

\BlankLine

\hspace*{-0.35cm}
\begin{tabular}{ll}
{\bf Algorithm}: & {Construction of a pair of normal finite-state  independent words}
\\
{\bf Input}: & No input
\\
{\bf Output}:& A sequence  $(I_n)_{n\geq 0} = ([u_n], [v_n])_{n\geq 0}$,
such that $u_n,v_n\in \{0,1\}^*$,
\\
& $|u_n|+|v_n|=n$ 
 and  $\bigcap_{i\geq 0} I_n $ contains a unique pair $(u,v)$ of
\\&
 finite-state independent normal words.
\end{tabular}

\BlankLine
\BlankLine

Let  ${\mathcal S}_1, {\mathcal S}_2, \ldots$ be a enumeration of shufflers.\;

For each $n\geq 1$, let 
$\ell_n = (\log n)/3$, 
$\varepsilon_n = 2 \sqrt{(\log n \log_2 n)/n}$ and
\begin{align*}
&F_n = \bigcap_{i=1}^n  \bigcap_{\gamma \in 2^{\leq \ell_n}}
E_{{\mathcal S}_i}(\varepsilon_n, \gamma, n), \text{ where }
\\
&E_{{\mathcal S}_i}(\varepsilon_n, \gamma, n)=\{
(x,y) \in \{0,1\}^\omega\times\{0,1\}^\omega
: \left| \  \occ{{\mathcal S}_i(x,y)[1..n]}{ \gamma} - n/2^{|\gamma|} \right| < n\varepsilon_n \}.
\end{align*}
Let  $n_0=\log_{2}\min\{ n :  \varepsilon_n \geq 6 / \lfloor n/\ell_n \rfloor\}$.
We write $\lambda$ for the empty word.

\BlankLine

\Begin{
$n \gets 0$\;
$I_0 \gets ([\lambda], [\lambda])$\;
$G_0 \gets ([\lambda], [\lambda])$\;

\Repeat{}{
    $([u_n],[v_n])\gets I_n$\;
    \If{$n$ is even} {
        $I_n^0 \gets ([u_n 0], [v_n])$\;
        $I_n^1 \gets ([u_n 1], [v_n])$\;
    }
    \Else{
        $I_n^0 \gets ([u_n], [v_n 0])$\;
        $I_n^1 \gets ([u_n], [v_n 1])$\;
    }
   \BlankLine

   $G_{n+1} \gets G_n \cap F_{2^{n_0 + n+1}}$;
   \BlankLine

    \If{$\mu(I_n^0 \cap G_{n+1}) > 2^{-2n+1}$} {
        $I_{n+1} \gets I_n^0$\;
    }
    \Else{
        $I_{n+1} \gets I_n^1$\;
    }
   \BlankLine
    \Print{$I_{n+1}$}\;

    $n \gets n+1$\;
}
}

\caption{Construction of a pair of normal finite-state independent words using shufflers}
 \label{alg:pair}
\end{algorithm}

\setcounter{theorem}{\value{algocf}}

\begin{proof}[Proof of Theorem~\ref{thm:algorithm}]
For clarity we present  the proof  for the alphabet $A=\{0,1\}$, hence $|A|=2$.
It is straightforward transfer the proof to any alphabet of an arbitrary size.
We  prove that Algorithm~\ref{alg:pair}
constructs of a pair of finite-state independent normal words.
From the algorithm is immediate that the sequence  
$(I_n)_{n \geq 0}$ is such that for every $n$, $I_{n+1} \subset I_n$,  and 
$\mu(I_{n+1}) = {\mu(I_n)}/{2}$.
We show that  for every $n$,
$\mu(I_n \cap G_n) > 0$.
We  prove by induction that for every $n$, 
\[
\mu(G_n \cap I_n) > 2^{-2n-1}.
\]
For the base case, $n = 0$, 
$\mu(G_0 \cap I_0) = 1 > 2^{-1}$.
For the inductive step, $n+1$, since 
\[
\mu(\overline{F_{2^{n_0 + n+1}}}) < \frac{1}{(2^{n_0+n+1})^2} = 2^{-2(n_0+n+1)} < 2^{-2(n+1)},
\]
we have
\begin{align*}
\mu(G_{n+1} \cap I_n) & =  \mu(G_n \cap I_n \cap F_{2^{n_0+n+1}}) \\
    & > 2^{-2n-1} - 2^{-2(n+1)}\\
    & = 2^{-2(n+1)}.
\end{align*}
Then, at least one of $G_{n+1} \cap I_n^0$ and $G_{n+1} \cap I_n^1$ 
must have measure greater than $2^{-2(n+1)-1}$, as required.
Since 
$(I_n)_{n\geq 0}$ is a nested sequence of intervals  of strictly decreasing but positive measure, 
and for every $n$, $\mu(G_n \cap I_n) > 0$,
we conclude that
\[
\bigcap_{n\geq 0} I_n = \bigcap_{n\geq 0} G_n \cap I_n
\]  
contains a unique pair $(u,v)$.
And by Lemma~\ref{lemma:G} all the elements in $\bigcap_{n\geq 0} G_n$
are pairs of  finite-state  independent normal words. This concludes the proof.
\end{proof}

\subsection{Computational complexity}

Algorithm~\ref{alg:pair} computes  a sequence  $(I_n)_{n\geq 0}$ 
 of pairs of cylinders in $\{0,1\}^\omega\times\{0,1\}^\omega$
 such that $\bigcap_{i\geq 0} I_n $ contains 
a unique pair $(u,v)$ of finite-state independent words.
We now establish its computational complexity.

\begin{proposition}
  Algorithm~\ref{alg:pair} has doubly exponential complexity: to output $n$
  symbols of the finite-state independent normal words $u$ and $v$ the algorithm
  performs a number of mathematical operations that is doubly exponential
  in~$n$.
\end{proposition}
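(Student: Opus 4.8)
The plan is to analyze the per-step cost of Algorithm~\ref{alg:pair} and show it is doubly exponential in the step index~$n$. The computational bottleneck at step~$n$ is the computation of the measure $\mu(I_n^0 \cap G_{n+1})$, which in turn requires computing $G_{n+1} = G_n \cap F_{2^{n_0+n+1}}$. So first I would trace through what data the algorithm must manipulate to evaluate this measure exactly. The set $G_{n+1}$ is an intersection of the sets $F_{|A|^{n_0+j}}$ for $0 \le j \le n+1$, and each $F_N$ is itself a finite intersection (over $t_N = N$ shufflers, over block lengths $r \le \ell_N$, and over words $\gamma \in A^r$) of sets $E_{\mathcal{S}_i}(\varepsilon_N, \gamma, N)$. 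Crucially, by Proposition~\ref{prop:boundE}, each $E_{\mathcal{S}}(\varepsilon,\gamma,n) = \bigcup_{w} \mathcal{S}^{-1}([w])$ is a finite union of pairs of cylinders of total length~$n$; hence $G_{n+1}$ is a Boolean combination of cylinder-pairs, and its measure (and the measure of its intersection with $I_n^0$) is computable exactly by counting cylinder-pairs.

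Second, I would bound the size of this cylinder-pair representation. The dominant contribution comes from $F_N$ with the largest index, namely $N = 2^{n_0+n+1}$. For this value, the membership condition in $E_{\mathcal{S}_i}(\varepsilon_N,\gamma,N)$ depends on prefixes of length~$N$ of the inputs, so the relevant cylinders live in $A^{\le N} \times A^{\le N}$ with $|u|+|v| = N$. The number of such cylinder-pairs is on the order of $|A|^{N}$, i.e.\ $|A|^{2^{n_0+n+1}}$, which is already doubly exponential in~$n$. Then I would count the number of sets $E_{\mathcal{S}_i}$ being intersected: there are $t_N = N$ shufflers, and for each the block words range over $\gamma \in \bigcup_{r \le \ell_N} A^r$, contributing a factor polynomial in $|A|^{\ell_N} = N^{1/3}$. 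The upshot is that forming $G_{n+1}$ and evaluating $\mu(I_n^0 \cap G_{n+1})$ requires enumerating and intersecting of the order of $|A|^{2^{n_0+n+1}}$ cylinder-pairs, each operation itself cheap, so the total number of arithmetic operations at step~$n$ is doubly exponential in~$n$.

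Third, I would confirm that the preprocessing that fixes $n_0$ is a constant (independent of~$n$): computing $n_{\text{start}} = \min\{n : \varepsilon_n \ge 6/\lfloor n/\ell_n\rfloor\}$ and then $n_0 = \log_{|A|} n_{\text{start}}$ happens once, so it does not affect the per-step asymptotics. I would also note that the simulation of each shuffler $\mathcal{S}_i$ on all relevant inputs of length up to $N$ is subsumed in the cylinder enumeration: the preimage $\mathcal{S}_i^{-1}([w])$ is computed by running the (deterministic) splitter corresponding to $\mathcal{S}_i$, whose cost is linear in the prefix length~$N$, hence dwarfed by the doubly-exponential number of cylinder-pairs. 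Putting these together gives that to output the first $n$ symbols the algorithm performs a number of mathematical operations doubly exponential in~$n$.

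The main obstacle will be arguing that the doubly-exponential cost is genuinely incurred rather than merely an upper bound, and that no drastic shortcut collapses it. One must be careful that the measure $\mu(I_n^0 \cap G_{n+1})$ really does depend on full length-$N$ prefix information for $N$ exponential in~$n$, so that the cylinder-pair representation cannot be compressed to subexponential size. I would address this by pointing to the defining conditions of $E_{\mathcal{S}_i}(\varepsilon_N,\gamma,N)$, which constrain the count $\occ{\mathcal{S}_i(x,y)[1..N]}{\gamma}$ over a window of length~$N = 2^{n_0+n+1}$; since this count is sensitive to individual symbols throughout that window and different shufflers route symbols differently, the induced partition of $A^\omega \times A^\omega$ has of the order of $|A|^{N}$ atoms, forcing the representation to be doubly exponential in~$n$. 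This pins the complexity from below and completes the argument.
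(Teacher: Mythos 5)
Your proposal is correct and follows essentially the same route as the paper: both identify the bottleneck at step~$n$ as the computation of $F_{2^{n_0+n+1}}$ (needed for $G_{n+1}$ and the measure test), and both observe that this forces an enumeration of on the order of $|A|^{2^{n_0+n+1}}$ words or cylinder-pairs of length $2^{n_0+n+1}$, which is doubly exponential in~$n$ while only $n$ output symbols have been produced. The paper additionally derives, from the requirement $\sum_n 2^n\mu(\overline{F_{s_{n-1}}})<1$ together with $\mu(\overline{F_s})<1/s^2$, that the indices $s_n$ must grow at least exponentially, which plays the role of your closing ``no shortcut'' discussion.
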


\begin{proof}
  As in Turing's original construction, the complexity of each step of our
  algorithm is dominated by the computation of the set
  $F_{n_0 + 2^{2^n+1}}$, which is doubly exponential.  Notice that the
  measures of the inspected sets can be calculated in simply exponential
  time, and the rest of the computation takes constant time.

The construction works by taking a sequence of ``good sets'' $(G_n)_{n\geq 0}$ 
and a sequence $(I_n)_{n\geq 0}$ of pairs of cylinders in $\{0,1\}^\omega\times \{0,1\}^\omega$.
For the initial step, $n=0$, $\mu(G_0) = 1, \mu(I_0) = 1$, and $\mu(G_0 \cap I_0) = 1$.
For subsequent steps, we refine $G_n$ into $G_{n+1}$ and choose 
one suitable half of $I_n$ to be $I_{n+1}$.
We now find out the length $s_n$ of the shuffling that need to be inspected at step $n$ of the algorithm.
At step $n$,  $G_{n+1} = G_n \cap F_{s_n}$
and 
$\mu(G_{n+1}) \geq \mu(G_n) - \mu(\overline{F_{s_n}})$.
The algorithm chooses  the  half of $I_n$ whose  intersection with $G_{n+1}$
is  at least  $(\mu(G_n) - \mu(\overline{F_{s_n}}))/2$.
We need that for each $n$, this measure is  positive:
\begin{align*}
\big(  (  ( (\mu(G_0)-\mu(\overline{F_{s_0}}))/2-\mu(\overline{F_{s_1}}))/2- \mu(\overline{F_{s_2}}))/2 \ldots- \mu(\overline{F_{s_{n-1}}}) \big)/2 &> 0
\\
2^{-n} - 2^{-(n-1)}\mu(\overline{F_{s_0}}) - \ldots - 2^{-1} \mu(\overline{F_{s_{n-1}}})& > 0
\\
\hspace*{-8cm}\text{Multiplying by $2^n$}\hspace*{8cm}&
\\
1 - 2\mu(\overline{F_{s_0}}) - \ldots - 2^{n-1} \mu(\overline{F_{s_{n-1}}}) &> 0
\\
\sum_{n=1}^{\infty} 2^n \mu(\overline{F_{s_{n-1}}}) &< 1.
\end{align*}
Therefore, we require 
$\sum_{n=1}^{\infty} 2^n \mu(\overline{F_{s_{n-1}}}) < 1$ while
 Proposition~\ref{prop:boundquad} 
establishes that
\linebreak
 $\mu(\overline{F_{s_{n-1}}}) < 1/s_{n-1}^2$.
Thus,  we require    $s_{n-1}\geq 2^n$, which shows 
the needed exponential growth in the index of the sets $F_{s_n}$.
Notice that the  algorithm fixes  $s_{n}=2^{n+1}$
and  the computation of the set $F_{s_n}$ requires the inspection of 
$2^{s_n}$ words of length $s_{n}$.  
Then at step $n$ the algorithm performs a number of operations 
that is doubly exponential in $n$.
Finally notice that at step $n$ the algorithm outputs $n$ symbols
in the form of two words $u_n$, $v_n$, such that~$|u_n|+|v_n|=n$.
%
\end{proof}

\section{Open problems}\label{sec:conclusion}

As a conclusion, we would like to mention a few open problems.
\begin{enumerate}
\item The characterization of finite-state independence of normal words
  given in Theorem~\ref{thm:charac} uses asynchronous deterministic finite
  automata with no extra memory (counters, stack).  Determine if the same
  characterization holds for the non-deterministic version of the same
  finite automata.  We have pursued this line of investigation
  in~\cite{BecherCartonHeiber15,CartonHeiber15} for the characterization of
  normality in terms of incompressibility by finite-automata and
  essentially we found that, without extra memory, non-determinism, two-way
  does not add compressibility power.

\item Give a purely combinatorial characterization of finite-state
  independence of normal words.  We aim at a condition on the two sequences
  that is defined in combinatorial terms, without mentioning automata (in
  the same way that the definition of normality can be stated in terms of
  frequency of blocks).

\item There are efficient algorithms that compute absolutely normal numbers
  with nearly quadratic complexity as~\cite{BHS2013} or, as recently
  announced, in poly-logarithmic linear complexity~\cite{LM2016}.  It may
  be possible to adapt those algorithms to efficiently compute a pair of
  finite-state independent normal sequences.

\item Construct a normal word that is finite-state independent of some
  given normal word.  That is, given a word that has been proved to be
  normal, as Champernowne's word, we aim to construct another normal word
  that is finite-state independent of it.
\end{enumerate} 
\bigskip
\bigskip

\noindent
\textbf{Acknowledgements.}

The authors are members of the Laboratoire International Associ\'e INFINIS,
CONICET/Universidad de Buenos Aires–CNRS/Universit\'e Paris Diderot and
they are partially supported by the ECOS project PA17C04.  Carton is
partially funded by the DeLTA project (ANR-16-CE40-0007).

\bibliographystyle{plain}
\bibliography{sss}
\medskip

{\setstretch{0.9}
\begin{minipage}{\textwidth}
\noindent
Nicol\'as Alvarez\\
ICIC - Universidad Nacional del Sur, CONICET\\
Departamento de Ciencias en Ingenier\'ia de la Computaci\'on\\
naa@cs.uns.edu.ar
\\
\medskip\\
Ver\'onica Becher
\\
 Departamento de  Computaci\'on,   Facultad de Ciencias Exactas y Naturales
\\
Universidad de Buenos Aires \& ICC,  CONICET, Argentina.
\\
vbecher@dc.uba.ar
\\
\medskip\\
Olivier Carton
\\
Institut de Recherche en Informatique Fondamentale
\\
Universit\'e Paris Diderot
\\
Olivier.Carton@irif.fr
\end{minipage}
}
\end{document}